\newcommand{\online}[1]{url: \url{#1}\;}
\newtheoremstyle{break}{}{}
                       {\normalsize\normalfont\itshape}
                       {}
                       {\normalsize\normalfont\bfseries}
                       {}{\newline}
                       {}
\newtheoremstyle{breakrm}{}{}
                         {\normalsize\normalfont\rmfamily}
                         {}
                         {\normalsize\normalfont\bfseries}
                         {}{\newline}
                         {}
\theoremstyle{break}
\newtheorem{thm}{Theorem}
\newtheorem{prop}{Proposition}
\newtheorem{lem}{Lemma}
\theoremstyle{breakrm}
\newcommand*{\ie}{\emph{i.e.}}
\newcommand*{\logsys}[1]{\textnormal{\textsf{#1}}}
\newcommand*{\BCD}{\logsys{BCD}}
\newcommand*{\Coq}{\texttt{\upshape Coq}}
\newcommand*{\sem}[1]{\llbracket #1\rrbracket}
\newcommand*{\subst}[3]{#1[{^{#2}/_{#3}}]}
\newcommand*{\lb}[2]{\lambda #1.#2}
\newcommand*{\ap}[2]{#1\,#2}
\newcommand*{\pa}[2]{\langle #1,#2\rangle}
\newcommand*{\pl}[1]{\pi_1\,#1}
\newcommand*{\pr}[1]{\pi_2\,#1}
\newcommand*{\imp}{\rightarrow}
\newcommand*{\inter}{\cap}
\newcommand*{\biginter}{\bigcap}
\newcommand*{\omg}{\Omega}
\newcommand*{\reduct}{\rightarrow}
\newcommand*{\bred}{\reduct_\beta}
\newcommand*{\pred}{\reduct_\pi}
\newcommand*{\var}{\textit{var}}
\newcommand*{\abs}{\textit{abs}}
\newcommand*{\app}{\textit{app}}
\newcommand*{\pair}{\textit{pair}}
\newcommand*{\projl}{\textit{proj}\ensuremath{_1}}
\newcommand*{\projr}{\textit{proj}\ensuremath{_2}}
\newcommand*{\ax}{\ensuremath{\textit{ax}}}
\newcommand*{\cut}{\ensuremath{\textit{cut}}}
\newcommand*{\exc}{\ensuremath{\textit{ex}}}
\newcommand*{\refl}{\textit{refl}}
\newcommand*{\trans}{\textit{trans}}
\newcommand*{\axomgr}{\ensuremath{\omg_r}}
\newcommand*{\axinterll}{\ensuremath{\inter_l^1}}
\newcommand*{\axinterlr}{\ensuremath{\inter_l^2}}
\newcommand*{\axinterr}{\ensuremath{\inter_r}}
\newcommand*{\A}{\Gamma}
\newcommand*{\B}{\Delta}
\newcommand*{\C}{\Sigma}
\newcommand*{\AIC}[1]{\AxiomC{\ensuremath{#1}}}
\newcommand*{\ZIC}[1]{\AIC{}\UIC{#1}}
\newcommand*{\UIC}[1]{\UnaryInfC{\ensuremath{#1}}}
\newcommand*{\BIC}[1]{\BinaryInfC{\ensuremath{#1}}}
\newcommand*{\TIC}[1]{\TrinaryInfC{\ensuremath{#1}}}
\newcommand*{\QIC}[1]{\QuaternaryInfC{\ensuremath{#1}}}
\newcommand*{\QuIC}[1]{\QuinaryInfC{\ensuremath{#1}}}
\newcommand*{\DP}{\DisplayProof}
\newcommand*{\RL}[1]{\RightLabel{\ensuremath{#1}}}
\newcommand{\newrule}[3]{\newcommand{#1}[1]{\RL{#2}\csname #3IC\endcsname{##1}}}
\newrule{\TRANS}{\trans}{B}
\newrule{\VAR}{\var}{Z}
\newrule{\ABS}{\abs}{U}
\newrule{\APP}{\app}{B}
\newrule{\PAIR}{\pair}{B}
\newrule{\PROJL}{\projl}{U}
\newrule{\PROJR}{\projr}{U}
\newrule{\INTER}{\inter}{B}
\newrule{\OMG}{\omg}{Z}
\newrule{\LEQ}{\leq}{B}
\newrule{\REFL}{\refl}{Z}
\newrule{\IMP}{\imp}{B}
\newcommand*{\IScons}{\logsys{ISC}}
\newcommand*{\constr}{\kappa}
\newcommand*{\cprec}{\preccurlyeq}
\newcommand*{\convar}[1]{\alpha_{#1}}
\newcommand*{\covar}[1]{\beta_{#1}}
\newcommand*{\prewidth}{\varepsilon}
\newcommand*{\width}[1]{\prewidth({#1})}
\newcommand*{\wk}{\textit{wk}}
\newcommand*{\co}{\textit{co}}
\newcommand*{\wkg}{\ensuremath{\wk_{gen}}}
\newcommand*{\constrr}{\textit{constr}}
\newcommand*{\interr}{\ensuremath{{\inter}R}}
\newcommand*{\interl}{\ensuremath{{\inter}L}}
\newcommand*{\interle}{\ensuremath{{\inter}L_e}}
\newrule{\ISAX}{\ax}{Z}
\newrule{\ISOMG}{\omg}{Z}
\newrule{\ISINTERR}{{\inter}R}{B}
\newrule{\ISINTERLL}{\interl_1}{U}
\newrule{\ISINTERLR}{\interl_2}{U}
\newrule{\ISIMPR}{{\imp}R}{U}
\newrule{\ISIMPL}{{\imp}L}{B}
\newrule{\ISCUTL}{\cut_1}{B}
\newrule{\ISCUTR}{\cut_2}{B}
\newrule{\ISCINTERR}{\interr}{B}
\newrule{\ISCINTERL}{\interl}{U}
\newrule{\ISCWK}{\wk}{U}
\newrule{\ISCC}{\constrr}{T}
\newrule{\ISCWKg}{\wkg}{U}
\newrule{\ISCEXC}{\exc}{U}
\newrule{\ISCAX}{\ax}{Z}
\newrule{\ISCINTERLe}{\interle}{U}
\newrule{\ISCCO}{\co}{U}
\newrule{\ISCCUT}{\cut}{B}
\newrule{\ISCAXX}{\ax_\textit{at}}{Z}
\newrule{\ISCOMG}{\omg}{Z}
\newrule{\ISCIMP}{\imp}{Q}
\newrule{\ISCIMPZ}{\imp_0}{U}
\newrule{\ISCTIMES}{\times}{B}
\title{Intersection Subtyping with Constructors}
\author{Olivier Laurent\thanks{This work was supported by the LABEX MILYON (ANR-10-LABX-0070) of Universit\'e de Lyon, within the program ``Investissements d'Avenir'' (ANR-11-IDEX-0007), and by the project Elica (ANR-14-CE25-0005), both operated by the French National Research Agency (ANR). This work was also supported by GDRI Linear Logic.}
\institute{Univ Lyon, EnsL, UCBL, CNRS,  LIP, F-69342, LYON Cedex 07, France}
\email{olivier.laurent@ens-lyon.fr}
}
\begin{document}
\maketitle

\begin{abstract}
We study the question of extending the \BCD{} intersection type system with additional type constructors. On the typing side, we focus on adding the usual rules for product types. On the subtyping side, we consider a generic way of defining a subtyping relation on families of types which include intersection types. We find back the \BCD{} subtyping relation by considering the particular case where the type constructors are intersection, omega and arrow. We obtain an extension of \BCD{} subtyping to product types as another instance. We show how the preservation of typing by both reduction and expansion is satisfied in all the considered cases. Our approach takes benefits from a ``subformula property'' of the proposed presentation of the subtyping relation.
\end{abstract}

% Keywords: intersection types ; subtyping ; product types ; cut elimination ; subject reduction

\section{Introduction}

Intersection type systems are tools for building and analysing models of the $\lambda$-calculus~\cite{interbcd,interbakel,paramlambda,intermodels}. They also provide ways of characterising reduction properties of $\lambda$-terms such as normalization.
The main difference to other type systems is the fact that not only \emph{subject reduction} holds (if $t$ reduces to $u$ and $\A\vdash t:A$ then $\A\vdash u:A$) but also \emph{subject expansion} holds (if $t$ reduces to $u$ and $\A\vdash u:A$ then $\A\vdash t:A$). As a consequence it is possible to define a denotational model by associating to each (closed) term the set of its types $\sem{t}=\{A\mid{}\vdash t:A\}$.

The most famous intersection type system is probably the \BCD{} system~\cite{interbcd}, and this is the one we are focusing on.
While \BCD{} insists on the interaction between arrow types and intersection types, we want to consider more general sets of type constructors, following~\cite{intermixin}. The \BCD{} type system can be decomposed into two parts: typing rules and subtyping rules. They are related through the subsumption rule. Our main contribution is a derivation system for the subtyping relation which allows us to deal with generic type constructors while satisfying a ``subformula property''.
Alternative presentations of \BCD{} subtyping are studied in~\cite{interlf}.
In contrast with ~\cite{intermixin}, we allow contravariant type constructors so that even the arrow constructor can be defined as an instance of our generic pattern, and only intersection has a specific status.

\bigskip

In Section~\ref{sectyp}, we recall standard syntactic proofs~\cite{intermodels,tutinter} of preservation of typing by $\beta$-reduction and $\beta$-expansion for the \BCD{} system. Our presentation stresses the fact that, starting from intersection (and $\omg$) only, type constructors can be added in a modular way. In Section~\ref{sectypimp}, we consider the arrow types, thus obtaining the usual \BCD{} rules. We extend the results to product types in Section~\ref{sectypprod}. The main part of the paper is then Section~\ref{secsub} where we propose a sequent-style derivation system for defining \BCD-like subtyping relations for extensions of intersection types to generic sets of constructors. Starting from a transitivity/cut admissibility property, we prove that instances of our system are equivalent with variants of the \BCD{} subtyping relation.

Key results on subtyping (Propositions~\ref{propisconsadmiss} and~\ref{propmin}, and Theorem~\ref{thmisconsbcd}) are formalized in \Coq:
\begin{center}
\url{https://perso.ens-lyon.fr/olivier.laurent/bcdc/}
\end{center}

\section{Intersection Typing}\label{sectyp}

We present the system we are looking at, which is mainly \BCD~\cite{interbcd} extended with product types. Type constructors are introduced in an incremental and modular way.

\subsection{Intersection Types}

Let us first consider an at most countable set $\mathcal{X}$ of base types denoted $X$, $Y$, etc, and consider types built using at least the following constructors:
\begin{equation*}
  A,B ::= X \mid A\inter B \mid \omg \mid \dotsc
\end{equation*}
Similarly we do not define the exact set of terms (denoted $t$, $u$, etc), but first we only assume they contain a denumerable set of term variables $\mathcal{V}$ (whose elements are denoted $x$, $y$, etc).
A first set of typing rules is given on Table~\ref{tabtyp}, with judgements $\A\vdash t:A$ built from a list $\A$ of typing declarations for variables (of the shape $x:B$), a term $t$ and a type $A$.
Note these rules rely on a \emph{subtyping} relation $\leq$ on types, which is here just a parameter.

\begin{table}
\centering
\begin{gather*}
\VAR{\A,x:A,\B\vdash x:A}
\DP
\qquad
\AIC{\A\vdash t:A}
\AIC{A\leq B}
\LEQ{\A\vdash t:B}
\DP
\qquad
\AIC{\A\vdash t:A}
\AIC{\A\vdash t:B}
\INTER{\A\vdash t:A\inter B}
\DP
\qquad
\OMG{\A\vdash t:\omg}
\DP
\end{gather*}
\caption{Typing Rules with Subtyping and Intersection.}\label{tabtyp}
\end{table}

\begin{lem}[Weakening]\label{lemwk}
If $\A\vdash t:A$ and $\B\leq\A$ (meaning that, for each $x:B$ in $\A$, one can find $x:B'$ in $\B$ with $B'\leq B$) then $\B\vdash t:A$.
\end{lem}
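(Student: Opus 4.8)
The plan is to proceed by induction on the derivation of $\A\vdash t:A$, following the structure of the typing rules in Table~\ref{tabtyp}. At each step I assume $\B\leq\A$ (in the pointwise sense stated) and build a derivation of $\B\vdash t:A$ from the derivations produced by the induction hypothesis. The only delicate rule is the variable axiom $\var$; the three remaining rules ($\leq$, $\inter$, $\omg$) go through essentially by replaying the same rule on the modified context.

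First I would treat the base case. Suppose the derivation ends with $\var$, so $t=x$, the context $\A$ has the form $\A_1,x:A,\A_2$, and the conclusion is $\A_1,x:A,\A_2\vdash x:A$. By the hypothesis $\B\leq\A$ there is a declaration $x:A'$ in $\B$ with $A'\leq A$. Apply the $\var$ rule to $\B$ at that declaration to get $\B\vdash x:A'$, then apply the subsumption rule $\leq$ with $A'\leq A$ to obtain $\B\vdash x:A$. This is the one place where subsumption must be invoked to repair the mismatch between $A'$ and $A$, and it is the main (if modest) obstacle in the argument: one must check that the notion of $\B\leq\A$ indeed guarantees such an $A'$ exists for the particular variable $x$ occurring in the conclusion. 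Note that nothing is required of the subtyping relation here beyond what is already packaged into the hypothesis, since $A'\leq A$ is supplied directly.

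For the inductive cases: if the derivation ends with $\omg$, then $t$ is arbitrary and $A=\omg$, and we simply apply $\omg$ again to conclude $\B\vdash t:\omg$, with no use of the induction hypothesis. If it ends with $\inter$, the last step combines $\A\vdash t:A_1$ and $\A\vdash t:A_2$ into $\A\vdash t:A_1\inter A_2$; by the induction hypothesis (using the same $\B\leq\A$) we get $\B\vdash t:A_1$ and $\B\vdash t:A_2$, and applying $\inter$ yields $\B\vdash t:A_1\inter A_2$. If it ends with $\leq$, the last step derives $\A\vdash t:A$ from $\A\vdash t:A''$ and $A''\leq A$; the induction hypothesis gives $\B\vdash t:A''$, and reapplying $\leq$ with the same $A''\leq A$ gives $\B\vdash t:A$. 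Since these are all the rules of Table~\ref{tabtyp}, the induction is complete. (When further constructors and their typing rules are added in later sections, the corresponding cases are handled in the same routine way, replaying the rule on $\B$ after invoking the induction hypothesis on each premise; no new use of subtyping beyond the $\var$ case is needed.)
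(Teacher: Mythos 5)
Your proof is correct and is exactly the standard induction on the typing derivation that the paper leaves implicit for this lemma: the ($\var$) case is repaired with one subsumption step using the $B'\leq B$ supplied by $\B\leq\A$, and the non-term rules are simply replayed. One minor caveat about your closing parenthetical: when the binder rule ($\abs$) is added later, invoking the induction hypothesis on the premise $\A,x:A\vdash t:B$ requires $\B,x:A\leq\A,x:A$, which needs reflexivity of $\leq$ (available from Table~\ref{tabsub}), so there is a small additional use of subtyping beyond the ($\var$) case.
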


\begin{lem}[Strengthening]\label{lemstr}
If $\A,x:B,\B\vdash t:A$ and $x\notin t$ then $\A,\B\vdash t:A$.
\end{lem}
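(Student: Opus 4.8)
The plan is to proceed by induction on the derivation of $\A,x:B,\B\vdash t:A$, case-splitting on the last rule applied. The statement to prove by induction should be strengthened slightly so that the induction goes through cleanly: for every derivation of $\C_1,x:B,\C_2\vdash t:A$ with $x\notin t$, there is a derivation of $\C_1,\C_2\vdash t:A$. Since the context $\A$ in the four typing rules of Table~\ref{tabtyp} is just a passive parameter in three of them and only inspected in the $\var$ rule, almost all cases are immediate.

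First I would treat the $\var$ case: here $t$ is a variable $y$ and the conclusion is $\C_1,x:B,\C_2\vdash y:A$ obtained because $y:A$ occurs in $\C_1,x:B,\C_2$. Since $x\notin t$ means $y\neq x$, the declaration $y:A$ must already occur in $\C_1$ or in $\C_2$, so $\var$ applies directly to give $\C_1,\C_2\vdash y:A$. For the $\leq$ rule, the premise is $\C_1,x:B,\C_2\vdash t:A'$ with $A'\leq A$, and since $x\notin t$ still holds for the premise, the induction hypothesis gives $\C_1,\C_2\vdash t:A'$, and one reapplies $\leq$. The $\inter$ rule has premises $\C_1,x:B,\C_2\vdash t:A_1$ and $\C_1,x:B,\C_2\vdash t:A_2$ with $A=A_1\inter A_2$; since the subject $t$ is unchanged, $x\notin t$ holds for both premises, so the induction hypothesis applies to each and one reapplies $\inter$. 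The $\omg$ rule has no premise and the context is arbitrary, so $\C_1,\C_2\vdash t:\omg$ is an instance of it directly.

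I do not expect any real obstacle here: the only mildly delicate point is bookkeeping the position of $x$ in the context (it may sit anywhere, hence the $\C_1,\C_2$ splitting rather than just an outer context), and noting that in every rule other than $\var$ the subject term is either unchanged or a strict subterm, so the hypothesis $x\notin t$ is preserved in passing to the premises. If the term grammar is later enriched with further constructors (abstraction, application, pairing, projections, as the introduction announces), the corresponding typing rules would each add an easy case of exactly the same shape — the subject of each premise is a subterm of $t$, so $x$ does not occur in it either — which is presumably why the paper states Strengthening at this stage in its modular form.
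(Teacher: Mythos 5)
Your proposal is correct and is exactly the intended argument: the paper states Lemma~\ref{lemstr} without an explicit proof, and the routine induction on the typing derivation, with the ($\var$) case discharged by $x\notin t$ forcing the declaration of the subject variable to lie in $\A$ or $\B$, and the non-term rules ($\leq$), ($\inter$), ($\omg$) handled by direct appeal to the induction hypothesis, is the standard proof the author relies on. Your closing remark about added term rules contributing only analogous easy cases also matches the paper's own observation in Section~\ref{sectypimp} that the lemma persists under the modular extensions.
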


Because it makes hypotheses on the term in conclusion, the rule ($\var$) is called a \emph{term rule} (the introduced term must be a variable). In the opposite, ($\leq$), ($\inter$) and ($\omg$) rules are called \emph{non-term} as they apply on any term without any constraint on its main constructor.
As a term rule, ($\var$) admits a so-called \emph{generation lemma} analysing how variables can by typed. For this, we make some hypotheses on the subtyping relation (see Table~\ref{tabsub}).

Note in passing, that the axioms of Table~\ref{tabsub} are equivalent to say that $\leq$ is a preorder relation with $\inter$ as greatest lower bound and $\omg$ as top element. In particular, up to the equivalence relation induced by $\leq$, $\inter$ is a commutative associative idempotent operation with $\omg$ as unit. As a consequence the notation $\biginter_{i\in I}A_i$ makes sense (up to the equivalence relation induced by $\leq$) for any (possibly empty) finite set $I$.

\begin{table}
\centering
\begin{align}
  A &\leq A \tag{\refl}\label{axsubrefl}\\
  A\leq B \;\wedge\; B\leq C \;&\Rightarrow\; A\leq C \tag{\trans}\label{axsubtrans}\\
  A\inter B &\leq A \tag{\axinterll}\label{axsubinterll} \\
  A\inter B &\leq B \tag{\axinterlr}\label{axsubinterlr} \\
  C\leq A \;\wedge\; C\leq B \;&\Rightarrow\; C\leq A\inter B \tag{\axinterr}\label{axsubinterr}\\
  A &\leq \omg \tag{\axomgr}\label{axsubomgr}
\end{align}
\caption{Kernel Properties of Subtyping.}\label{tabsub}
\end{table}

\begin{lem}[Generation for Variables]\label{lemgenvar}
Assuming that ($\var$) is the only term rule introducing a variable, the only non-term rules are ($\leq$), ($\inter$) and ($\omg$), and that the axioms of Table~\ref{tabsub} are satisfied, we have:
if $\A\vdash x:A$ with $x:B\in\A$ then $B\leq A$.
\end{lem}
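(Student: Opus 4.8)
The plan is to proceed by induction on the derivation of $\A\vdash x:A$, using the hypotheses to restrict the last rule applied to one of the four available rules. Since ($\var$) is assumed to be the only term rule that introduces a variable, and the only non-term rules are ($\leq$), ($\inter$) and ($\omg$), the last rule of the derivation must be one of ($\var$), ($\leq$), ($\inter$) or ($\omg$). In each case I want to conclude $B\leq A$, where $x:B$ is the (unique, assuming the context is treated as a list with the relevant occurrence fixed) declaration of $x$ in $\A$.

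For the base case, if the last rule is ($\var$), then the conclusion is $\A\vdash x:A$ with $x:A\in\A$; identifying this with the declaration $x:B$ gives $B=A$, and $B\leq A$ follows from the reflexivity axiom~\eqref{axsubrefl}. If the last rule is ($\omg$), then $A=\omg$ and the axiom~\eqref{axsubomgr} gives $B\leq\omg=A$ directly, with no need for the induction hypothesis. If the last rule is ($\leq$), then the premises are $\A\vdash x:A'$ and $A'\leq A$ for some $A'$; the induction hypothesis applied to the first premise gives $B\leq A'$, and transitivity~\eqref{axsubtrans} yields $B\leq A$. If the last rule is ($\inter$), then $A=A_1\inter A_2$ with premises $\A\vdash x:A_1$ and $\A\vdash x:A_2$; the induction hypothesis gives $B\leq A_1$ and $B\leq A_2$, and~\eqref{axsubinterr} gives $B\leq A_1\inter A_2=A$.

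I do not expect any real obstacle here: the statement is essentially a bookkeeping argument, and the only subtlety is making sure the case analysis on the last rule is exhaustive, which is exactly what the three stated hypotheses guarantee (no other term rule can introduce a variable, the non-term rules are limited to the three listed, and those three plus ($\var$) all interact well with $\leq$ via the kernel axioms of Table~\ref{tabsub}). One point worth stating carefully is the treatment of the context: if $x$ occurs several times in $\A$, the ($\var$) rule as written picks out one occurrence, so the generation lemma should be read as: for the occurrence $x:B$ used (or, by Lemma~\ref{lemwk} together with the axioms, for any occurrence, taking the relevant $B$), one has $B\leq A$. The mild care needed in phrasing this is the closest thing to a difficulty.
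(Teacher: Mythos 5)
Your proof is correct and follows exactly the pattern the paper uses for its other generation lemmas (Lemmas~\ref{lemgenpair} and~\ref{lemgenprojl}): induction on the typing derivation with an exhaustive case analysis on the last rule ($\var$), ($\leq$), ($\inter$), ($\omg$), discharged by the kernel axioms of Table~\ref{tabsub}. The paper leaves this particular proof implicit, and your remark about repeated declarations of $x$ in the context is the right caveat to make; nothing further is needed.
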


\begin{lem}[Substitution]\label{lemsubst}
  If $\A,x:A,\B\vdash t:B$ and $\A,\B\vdash u:A$ then $\A,\B\vdash\subst{t}{u}{x}:B$.
\end{lem}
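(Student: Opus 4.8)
The plan is to proceed by induction on the derivation of $\A,x:A,\B\vdash t:B$, performing a case analysis on the last rule applied, and keeping the second hypothesis $\A,\B\vdash u:A$ fixed throughout the induction. Note that induction on the structure of $t$ would not work directly, since the non-term rules do not decrease the term.

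For the non-term rules the argument is immediate and uniform. The rules ($\leq$), ($\inter$) and ($\omg$) modify neither the term nor the context in passing from premises to conclusion, so every premise has the shape $\A,x:A,\B\vdash t:B'$; the induction hypothesis then yields $\A,\B\vdash\subst{t}{u}{x}:B'$ for each premise, and re-applying the same rule gives $\A,\B\vdash\subst{t}{u}{x}:B$ (for ($\leq$) reusing the unchanged side condition $B'\leq B$; for ($\inter$) combining the two substituted premises; for ($\omg$) there is not even a premise to invoke, since $\subst{t}{u}{x}$ can always be typed with $\omg$).

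The only case with real content is the term rule ($\var$), where $t$ is a variable $y$ and $y:B$ occurs in $\A,x:A,\B$. If $y\neq x$, then $y:B$ already occurs in $\A,\B$ and $\subst{t}{u}{x}=y$, so ($\var$) directly gives $\A,\B\vdash y:B$. If $y=x$, then, since a context declares each variable at most once, the type read off by ($\var$) is exactly $A$, that is $B=A$; moreover $\subst{x}{u}{x}=u$, so the goal $\A,\B\vdash u:A$ is precisely the second hypothesis. This subcase, where that hypothesis is consumed, is essentially the whole proof; I expect no other difficulty for the system of Table~\ref{tabtyp}.

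The same pattern will apply when further term rules are added for the arrow and product constructors in Sections~\ref{sectypimp} and~\ref{sectypprod}: substitution commutes with each term constructor, the context is altered identically across the premises of a rule, and so the induction hypothesis plus a re-application of the rule closes the case. The one precaution needed there is that a rule introducing a binder must have its bound variable chosen distinct from $x$ and from the free variables of $u$, so that $\subst{\cdot}{u}{x}$ performs no capture; this is ensured by the usual Barendregt variable convention. Beyond that bookkeeping this is the standard syntactic substitution lemma.
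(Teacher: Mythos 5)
Your proof is correct and follows the same (standard) route the paper intends for this lemma, which it leaves unproved: induction on the derivation of $\A,x:A,\B\vdash t:B$, with the non-term rules passing through and the only substantive case being ($\var$) when the introduced variable is $x$ itself. One small caveat about your closing remark on the extensions: in a binder case such as ($\abs$) the induction hypothesis needs $\A,\B,y:C\vdash u:A$, so you must also weaken the hypothesis on $u$ via Lemma~\ref{lemwk} (not merely re-apply the rule), which is precisely why the paper notes that this lemma ``only relies on the previous lemmas''.
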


\subsection{Arrow Types}\label{sectypimp}

We now assume types contain an arrow constructor and terms are extended correspondingly:
\begin{equation*}
  A,B ::= X \mid A\inter B \mid \omg \mid A\imp B \mid \dotsc
\hspace{2cm}
  t,u ::= x \mid \lb{x}{t} \mid \ap{t}{u} \mid \dotsc
\end{equation*}
The associated typing rules are given on Table~\ref{tabtypimp}. Note the two new rules are term rules corresponding respectively to $\lb{x}{t}$ and $\ap{t}{u}$ (no new non-term rule).

\begin{table}
\centering
\begin{equation*}
\AIC{\A,x:A\vdash t:B}
\ABS{\A\vdash \lb{x}{t}:A\imp B}
\DP
\qquad\qquad
\AIC{\A\vdash t:A\imp B}
\AIC{\A\vdash u:A}
\APP{\A\vdash \ap{t}{u}:B}
\DP
\end{equation*}
\caption{Typing Rules for Arrow.}\label{tabtypimp}
\end{table}

By adding new cases corresponding to the added rules in the proofs, one can check that Lemmas~\ref{lemwk} and~\ref{lemstr}
still hold. Moreover the hypotheses of Lemma~\ref{lemgenvar} are still satisfied, and finally Lemma~\ref{lemsubst} (which only relies on the previous lemmas) is still true as well. Since terms may now contain binders (such as $\lb{x}{\_}$), substitution has to be considered as being capture-avoiding substitution.

\begin{lem}[Generation for Application]\label{lemgenapp}
Assuming that ($\app$) is the only term rule introducing an application, the only non-term rules are ($\leq$), ($\inter$) and ($\omg$), and that the axioms of Table~\ref{tabsub} are satisfied, we have:
if $\A\vdash\ap{t}{u}:B$ then there exist two families of types $(A_i)_{i\in I}$ and $(B_i)_{i\in I}$ with $\biginter_{i\in I}B_i\leq B$ and, for each $i\in I$, $\A\vdash t:A_i\imp B_i$ and $\A\vdash u:A_i$.
\end{lem}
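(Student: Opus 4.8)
The plan is to proceed by induction on the derivation of $\A\vdash\ap{t}{u}:B$, mimicking the structure of the proof of Lemma~\ref{lemgenvar} but now tracking a whole family of arrow types rather than a single bound. Since ($\app$) is the only term rule that can introduce an application, the last rule of the derivation is one of ($\app$), ($\leq$), ($\inter$) or ($\omg$), and I would treat these four cases.

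\textbf{Base case.} If the last rule is ($\app$), then the premises are $\A\vdash t:A\imp B$ and $\A\vdash u:A$ for some type $A$. Take $I=\{*\}$ a singleton, $A_*=A$ and $B_*=B$; then $\biginter_{i\in I}B_i = B\leq B$ by \eqref{axsubrefl}, and the per-index conditions hold by the premises.

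\textbf{Inductive cases.} If the last rule is ($\omg$), then $B=\omg$; take $I=\emptyset$, so that $\biginter_{i\in I}B_i=\omg\leq\omg$ and the (vacuous) per-index conditions are trivially satisfied. If the last rule is ($\inter$), then $B=B'\inter B''$ with premises $\A\vdash\ap{t}{u}:B'$ and $\A\vdash\ap{t}{u}:B''$; apply the induction hypothesis to each, obtaining families $(A_i,B_i)_{i\in I'}$ with $\biginter_{i\in I'}B_i\leq B'$ and $(A_j,B_j)_{j\in I''}$ with $\biginter_{j\in I''}B_j\leq B''$; take the disjoint union $I=I'\uplus I''$. Then $\biginter_{i\in I}B_i$ equals (up to the equivalence induced by $\leq$) $(\biginter_{i\in I'}B_i)\inter(\biginter_{j\in I''}B_j)$, which is $\leq B'\inter B''=B$ by \eqref{axsubinterr} together with \eqref{axsubinterll}, \eqref{axsubinterlr} and \eqref{axsubtrans}; the per-index conditions are inherited from the two sub-families. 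If the last rule is ($\leq$), then the premise is $\A\vdash\ap{t}{u}:B'$ with $B'\leq B$; apply the induction hypothesis to get a family with $\biginter_{i\in I}B_i\leq B'$, and conclude $\biginter_{i\in I}B_i\leq B$ by \eqref{axsubtrans}.

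\textbf{Main obstacle.} The delicate point is the ($\inter$) case: one must be careful that the notation $\biginter_{i\in I}B_i$ over an arbitrary finite (possibly empty) index set is only meaningful up to the equivalence relation induced by $\leq$, and that disjoint unions of index sets behave well under this identification. The earlier remark following Table~\ref{tabsub} — that $\inter$ is commutative, associative and idempotent with $\omg$ as unit modulo $\leq$-equivalence — is exactly what legitimizes combining the two sub-families and computing $\biginter_{i\in I'\uplus I''}B_i$ as the intersection of the two partial intersections; the rest is a routine application of the kernel axioms of Table~\ref{tabsub}.
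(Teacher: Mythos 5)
Your proof is correct and takes essentially the same route as the paper: the paper does not print a proof for this lemma, but its proofs of the analogous generation lemmas (Lemmas~\ref{lemgenpair} and~\ref{lemgenprojl}) proceed by exactly your induction on the typing derivation with the four cases (term rule giving a singleton family, ($\leq$) handled by (\trans), ($\inter$) by disjoint union of index sets and the derived monotonicity of $\inter$, and ($\omg$) by the empty family). Your handling of the ($\inter$) case, including the appeal to associativity/commutativity of $\inter$ up to $\leq$-equivalence, matches the paper's intended argument.
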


\begin{lem}[Generation for Abstraction]\label{lemgenabs}
Assuming that ($\abs$) is the only term rule introducing an abstraction, the only non-term rules are ($\leq$), ($\inter$) and ($\omg$), and that the axioms of Table~\ref{tabsub} are satisfied, we have:
if $\A\vdash\lb{x}{t}:A$ then there exist two families of types $(B_i)_{i\in I}$ and $(C_i)_{i\in I}$ with $\biginter_{i\in I}(B_i\imp C_i)\leq A$ and, for each $i\in I$, $\A,x:B_i\vdash t:C_i$.
\end{lem}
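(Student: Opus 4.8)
The plan is to argue by structural induction on the derivation of $\A\vdash\lb{x}{t}:A$, inspecting its last rule. Since $\lb{x}{t}$ is neither a variable nor an application, the term rules ($\var$) and ($\app$) cannot conclude such a derivation; by the hypotheses of the statement the last rule is therefore either the term rule ($\abs$) or one of the non-term rules ($\leq$), ($\inter$), ($\omg$). So there are exactly four cases to handle.

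First I would treat the base cases. For ($\abs$) we have $A=B\imp C$ with $\A,x:B\vdash t:C$; taking $I$ a singleton, $B_\star=B$, $C_\star=C$ works since $\biginter_{i\in I}(B_i\imp C_i)=B\imp C=A\leq A$ by \eqref{axsubrefl}. For ($\omg$) we have $A=\omg$, and I would take $I=\emptyset$, so that $\biginter_{i\in I}(B_i\imp C_i)=\omg\leq\omg$ by \eqref{axsubomgr}, with no per-index obligation to discharge.

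Next, the two inductive cases. For ($\leq$) the premise is $\A\vdash\lb{x}{t}:A'$ with $A'\leq A$; the induction hypothesis yields families with $\biginter_{i\in I}(B_i\imp C_i)\leq A'$, and \eqref{axsubtrans} gives $\biginter_{i\in I}(B_i\imp C_i)\leq A$, the per-index conditions being unchanged. For ($\inter$) we have $A=A_1\inter A_2$ with $\A\vdash\lb{x}{t}:A_1$ and $\A\vdash\lb{x}{t}:A_2$; applying the induction hypothesis to each premise gives families indexed by some $I_1$ and $I_2$, and I would take $I=I_1\uplus I_2$. The per-index conditions are inherited directly, and $\biginter_{i\in I}(B_i\imp C_i)$ equals $\big(\biginter_{i\in I_1}(B_i\imp C_i)\big)\inter\big(\biginter_{i\in I_2}(B_i\imp C_i)\big)\leq A_1\inter A_2=A$ by monotonicity of $\inter$ with respect to $\leq$, which itself follows from \eqref{axsubinterll}, \eqref{axsubinterlr}, \eqref{axsubinterr} and \eqref{axsubtrans}.

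There is no substantial obstacle here; the only points requiring a little care are the bookkeeping of the empty and disjoint-union index sets — relying on the earlier remark that $\biginter_{i\in I}$ is well defined up to the equivalence induced by $\leq$, with $\omg$ as the empty case — and the derived monotonicity of intersection. The argument closely parallels the proof of Lemma~\ref{lemgenapp}.
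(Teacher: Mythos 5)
Your proof is correct and follows exactly the approach the paper uses for its generation lemmas (the paper states Lemma~\ref{lemgenabs} without an explicit proof, but proves Lemmas~\ref{lemgenpair} and~\ref{lemgenprojl} by precisely this induction on the derivation, with the same four cases, the empty index set for ($\omg$), disjoint union for ($\inter$), and the derived monotonicity $E_1\leq E_2\;\wedge\;F_1\leq F_2\;\Rightarrow\;E_1\inter F_1\leq E_2\inter F_2$). Nothing is missing.
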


We now have the requested material to prove subject reduction and subject expansion. However a specific property on subtyping is still missing:
\begin{equation*}
  \biginter_{i\in I}(A_i\imp B_i)\leq A\imp B\;\Rightarrow\; \exists J\subseteq I,\;\left(\biginter_{i\in J}B_i\leq B\;\wedge\;\forall i\in J,\;A\leq A_i\right) \tag{\ensuremath{\imp\leq\imp}}\label{aximpinv}
\end{equation*}
The study of this property will be at the heart of Section~\ref{secsub}.

\begin{prop}[Subject Reduction]\label{propbetared}
  Assuming (\ref{aximpinv}), if $t_1\bred t_2$ and $\A\vdash t_1:A$ then $\A\vdash t_2:A$.
\end{prop}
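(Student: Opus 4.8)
The plan is to prove Proposition~\ref{propbetared} by induction on the reduction $t_1\bred t_2$, the only delicate case being the contraction of a head redex; the congruence cases follow uniformly from the generation lemmas and the induction hypothesis.

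\emph{Head redex.} Suppose $t_1=\ap{(\lb{x}{t})}{u}$, $t_2=\subst{t}{u}{x}$ and $\A\vdash t_1:A$. By Lemma~\ref{lemgenapp} there are families $(A_i)_{i\in I}$ and $(B_i)_{i\in I}$ with $\biginter_{i\in I}B_i\leq A$ and, for each $i$, $\A\vdash\lb{x}{t}:A_i\imp B_i$ and $\A\vdash u:A_i$. Applying Lemma~\ref{lemgenabs} to each $\A\vdash\lb{x}{t}:A_i\imp B_i$ gives families $(C^i_j)_{j\in J_i}$ and $(D^i_j)_{j\in J_i}$ with $\biginter_{j\in J_i}(C^i_j\imp D^i_j)\leq A_i\imp B_i$ and $\A,x:C^i_j\vdash t:D^i_j$ for each $j\in J_i$. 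I then feed this last inequality to the hypothesis (\ref{aximpinv}): there is $K_i\subseteq J_i$ with $\biginter_{j\in K_i}D^i_j\leq B_i$ and $A_i\leq C^i_j$ for all $j\in K_i$. For each $j\in K_i$, rule ($\leq$) upgrades $\A\vdash u:A_i$ to $\A\vdash u:C^i_j$, so Lemma~\ref{lemsubst} yields $\A\vdash\subst{t}{u}{x}:D^i_j$; rule ($\inter$) — or rule ($\omg$) if $K_i=\emptyset$ — then gives $\A\vdash\subst{t}{u}{x}:\biginter_{j\in K_i}D^i_j$, and ($\leq$) along $\biginter_{j\in K_i}D^i_j\leq B_i$ gives $\A\vdash\subst{t}{u}{x}:B_i$. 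Doing this for every $i\in I$ and combining via ($\inter$) — or ($\omg$) if $I=\emptyset$ — and ($\leq$) along $\biginter_{i\in I}B_i\leq A$ produces $\A\vdash\subst{t}{u}{x}:A$, as wanted.

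\emph{Congruence cases.} Here $t_1$ is an abstraction or an application with a reducing immediate subterm $s\bred s'$. If $t_1=\lb{x}{s}\bred\lb{x}{s'}=t_2$: Lemma~\ref{lemgenabs} gives $\biginter_{i\in I}(B_i\imp C_i)\leq A$ with $\A,x:B_i\vdash s:C_i$; the induction hypothesis gives $\A,x:B_i\vdash s':C_i$; and ($\abs$), ($\inter$), ($\leq$) rebuild $\A\vdash\lb{x}{s'}:A$. If $t_1=\ap{s}{u}\bred\ap{s'}{u}$ (resp. $t_1=\ap{u}{s}\bred\ap{u}{s'}$): Lemma~\ref{lemgenapp} gives $\biginter_{i\in I}B_i\leq A$ with $\A\vdash s:A_i\imp B_i$ and $\A\vdash u:A_i$ (resp. $\A\vdash u:A_i\imp B_i$ and $\A\vdash s:A_i$); the induction hypothesis upgrades the typings of the reducing subterm; and ($\app$), ($\inter$), ($\leq$) rebuild $\A\vdash t_2:A$.

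The hard part is the head-redex case, and inside it the single use of (\ref{aximpinv}). The generation lemmas only produce the ``syntactic'' inequality $\biginter_{j\in J_i}(C^i_j\imp D^i_j)\leq A_i\imp B_i$; it is exactly (\ref{aximpinv}) that converts it into the two usable facts $A_i\leq C^i_j$, which lets the argument $u$ be retyped so as to match the premises $\A,x:C^i_j\vdash t:D^i_j$, and $\biginter_{j\in K_i}D^i_j\leq B_i$, which recovers the announced result type. Everything else is routine, the only point of care being the possibly empty index sets $I$ and $K_i$, handled uniformly by reading an empty intersection as $\omg$ and using rule ($\omg$).
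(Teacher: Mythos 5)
Your proof is correct and follows essentially the same route as the paper: generation for application, then generation for abstraction, the single use of (\ref{aximpinv}) to retype the argument and recover the result type, the substitution lemma, and reassembly via ($\inter$) and ($\leq$). You merely spell out what the paper leaves implicit (the subsumption step on $u$, the empty index sets via ($\omg$), and the congruence cases), which changes nothing in substance.
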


\begin{proof}
The key case is $\ap{(\lb{x}{t})}{u}\bred\subst{t}{u}{x}$.
If $\A\vdash\ap{(\lb{x}{t})}{u}:A$, by Lemma~\ref{lemgenapp}, we have two families $(B_i)_{i\in I}$ and $(C_i)_{i\in I}$ with $\biginter_{i\in I}C_i\leq A$ and, for each $i\in I$, $\A\vdash\lb{x}{t}:B_i\imp C_i$ and $\A\vdash u:B_i$. For each $i\in I$, by Lemma~\ref{lemgenabs}, we have two families $(B'_j)_{j\in J_i}$ and $(C'_j)_{j\in J_i}$ with $\biginter_{j\in J_i}(B'_j\imp C'_j)\leq B_i\imp C_i$ and, for each $j\in J_i$, $\A,x:B'_j\vdash t:C'_j$. By (\ref{aximpinv}), there exists $K_i\subseteq J_i$ such that $B_i\leq B'_j$ ($j\in K_i$) and $\biginter_{j\in K_i}C'_j\leq C_i$.
We conclude by using Lemma~\ref{lemsubst} with $\A\vdash u:B'_j$:
\begin{prooftree}
  \AIC{\dotsb}
  \AIC{\dotsb}
  \AIC{\A\vdash\subst{t}{u}{x}:C'_j}
  \AIC{\dotsb}
  \RL{\inter}
  \TIC{\A\vdash\subst{t}{u}{x}:\biginter_{j\in K_i}C'_j}
  \AIC{\biginter_{j\in K_i}C'_j\leq C_i}
  \LEQ{\A\vdash\subst{t}{u}{x}:C_i}
  \AIC{\dotsb}
  \RL{\inter}
  \TIC{\A\vdash\subst{t}{u}{x}:\biginter_{i\in I}C_i}
  \AIC{\biginter_{i\in I}C_i\leq A}
  \LEQ{\A\vdash\subst{t}{u}{x}:A}
\end{prooftree}
\end{proof}

\begin{prop}[Subject Expansion]\label{propbetaexp}
  If $t_1\bred t_2$ and $\A\vdash t_2:A$ then $\A\vdash t_1:A$.
\end{prop}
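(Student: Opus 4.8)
The plan is to mirror the proof of Proposition~\ref{propbetared}, reversing the direction; this turns out to be easier, since no analogue of~(\ref{aximpinv}) is needed. As there, the only interesting case is the root $\beta$-step $\ap{(\lb{x}{t})}{u}\bred\subst{t}{u}{x}$, so the heart of the argument is an \emph{inverse substitution lemma}: if $\A\vdash\subst{t}{u}{x}:B$, where we may assume $x$ not declared in $\A$ up to $\alpha$-renaming, then there is a type $C$ with $\A,x:C\vdash t:B$ and $\A\vdash u:C$. Granting this, the root case is immediate: from $\A\vdash\subst{t}{u}{x}:A$ we get such a $C$, apply rule ($\abs$) to obtain $\A\vdash\lb{x}{t}:C\imp A$, and then rule ($\app$) with $\A\vdash u:C$ to conclude $\A\vdash\ap{(\lb{x}{t})}{u}:A$.

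I would establish the inverse substitution lemma by induction on the structure of $t$, feeding off the generation lemmas. If $t=x$, then $\subst{t}{u}{x}=u$ and $C:=B$ works ($\A,x:B\vdash x:B$ by ($\var$)). If $t=y\neq x$, then $\subst{t}{u}{x}=y$ and $C:=\omg$ works, using ($\omg$) to type $u$ and Lemma~\ref{lemgenvar} to rebuild $\A,x:\omg\vdash y:B$. If $t=\lb{y}{t'}$ with $y$ fresh (so $y\notin u$), Lemma~\ref{lemgenabs} yields families $(B_i)_{i\in I}$ and $(D_i)_{i\in I}$ with $\biginter_{i\in I}(B_i\imp D_i)\leq B$ and $\A,y:B_i\vdash\subst{t'}{u}{x}:D_i$ for each $i$; the induction hypothesis provides $C_i$ with $\A,y:B_i,x:C_i\vdash t':D_i$ and $\A,y:B_i\vdash u:C_i$, the latter giving $\A\vdash u:C_i$ by Lemma~\ref{lemstr}; then $C:=\biginter_{i\in I}C_i$ works, typing $u$ with ($\inter$) and ($\omg$), and deriving $\A,x:C\vdash\lb{y}{t'}:B$ by using Lemma~\ref{lemwk} (to move $x:C$ ahead of $y:B_i$ and to replace each $C_i$ by $C$) followed by ($\abs$), ($\inter$) and ($\leq$). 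The case $t=\ap{t_1}{t_2}$ is analogous with Lemma~\ref{lemgenapp} in place of Lemma~\ref{lemgenabs}, taking for $C$ the intersection of all the types obtained for $u$ from the two sub-inductions.

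To finish, I would lift the root case to arbitrary $\beta$-reduction by induction on the position of the contracted redex: when $t_1\bred t_2$ is not a root step, $t_1$ and $t_2$ share their outermost constructor and differ in a single immediate subterm, so one applies the matching generation lemma (Lemma~\ref{lemgenabs} or Lemma~\ref{lemgenapp}) to the typing of $t_2$, invokes the induction hypothesis on the reduced subterm, and reassembles the derivation with the same rules (the empty-family cases being handled by ($\omg$)). The only real work lies inside the inverse substitution lemma, and it is essentially bookkeeping: managing the finite families, the empty-family cases, and the context manipulations via Lemmas~\ref{lemwk} and~\ref{lemstr}. There is no genuine subtyping obstacle here — which is exactly why property~(\ref{aximpinv}) is absent from the statement.
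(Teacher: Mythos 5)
Your proposal is correct, and its outer structure is exactly the paper's: isolate the root case $\ap{(\lb{x}{t})}{u}\bred\subst{t}{u}{x}$, prove an inverse substitution lemma (from $\A\vdash\subst{t}{u}{x}:B$ extract a $C$ with $\A,x:C\vdash t:B$ and $\A\vdash u:C$), and rebuild the redex typing with ($\abs$) and ($\app$); like the paper, you correctly observe that no analogue of~(\ref{aximpinv}) is needed. Where you genuinely diverge is in how the inverse substitution lemma is proved: the paper does a single induction on the typing derivation of $\A\vdash\subst{t}{u}{x}:B$, which handles the non-term rules ($\leq$), ($\inter$), ($\omg$) directly and needs no auxiliary machinery, whereas you do a structural induction on $t$, invoking the generation lemmas (Lemmas~\ref{lemgenvar}, \ref{lemgenapp}, \ref{lemgenabs}) at each constructor and then reassembling with weakening/strengthening (Lemmas~\ref{lemwk}, \ref{lemstr}), intersections of the witness types, and the empty-family case via ($\omg$). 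Your route is heavier on bookkeeping (context permutations, $\biginter_{i\in I}C_i$ with $I$ possibly empty) but leans only on lemmas already established and makes the role of the generation lemmas symmetric with the subject-reduction proof; the paper's derivation induction is shorter and avoids the generation lemmas entirely for this direction. Both scale to the product extension in Section~\ref{sectypprod}. Minor nit: in the case $t=y\neq x$ the cleanest citation is Lemma~\ref{lemwk} (adding $x:\omg$ to the context) rather than Lemma~\ref{lemgenvar}, though the rebuild you sketch also works.
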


\begin{proof}
The key case is $\ap{(\lb{x}{t})}{u}\bred\subst{t}{u}{x}$.
  We first prove that $\A\vdash\subst{t}{u}{x}:B$ implies that we can find a type $A$ such that $\A,x:A\vdash t:B$ and $\A\vdash u:A$,
by induction on the derivation of $\A\vdash\subst{t}{u}{x}:B$.
And then:
  \begin{prooftree}
    \AIC{\A,x:A\vdash t:B}
    \ABS{\A\vdash\lb{x}{t}:A\imp B}
    \AIC{\A\vdash u:A}
    \APP{\A\vdash\ap{(\lb{x}{t})}{u}:B}
  \end{prooftree}
\end{proof}

To sum up, we have shown that, given the typing rules of Tables~\ref{tabtyp} and~\ref{tabtypimp}, the subject reduction and subject expansion properties hold for $\beta$-reduction as soon as the chosen subtyping satisfies the axioms of Table~\ref{tabsub} as well as property~(\ref{aximpinv}).
The historical example from the literature is the \BCD{} system~\cite{interbcd} corresponding to the subtyping relation of Table~\ref{tabsubbcd}. We will come back to the fact that (\ref{aximpinv}) holds for this \BCD{} relation (Lemma~\ref{leminvimp}).

\begin{table}
\centering
\begin{gather*}
\ZIC{A\leq A}
\DP
\qquad\qquad
\AIC{A\leq B}
\AIC{B\leq C}
\BIC{A\leq C}
\DP
\qquad\qquad
\ZIC{A\leq\omg}
\DP
\\[2ex]
\ZIC{A\inter B\leq A}
\DP
\qquad\qquad
\ZIC{A\inter B\leq B}
\DP
\qquad\qquad
\ZIC{A\leq A\inter A}
\DP
\qquad\qquad
\AIC{A\leq C}
\AIC{B\leq D}
\BIC{A\inter B\leq C\inter D}
\DP
\\[2ex]
\AIC{C\leq A}
\AIC{B\leq D}
\BIC{A\imp B\leq C\imp D}
\DP
\qquad\qquad
\ZIC{(A\imp B)\inter(A\imp C)\leq A\imp(B\inter C)}
\DP
\qquad\qquad
\ZIC{\omg\leq\omg\imp\omg}
\DP
\end{gather*}
\caption{\BCD{} Subtyping Rules.}\label{tabsubbcd}
\end{table}

\subsection{Product Types}\label{sectypprod}

We now assume types contain a product constructor and terms are extended correspondingly:
\begin{equation*}
  A,B ::= X \mid A\inter B \mid \omg \mid A\imp B \mid A\times B \mid \dotsc
\hspace{1.5cm}
  t,u ::= x \mid \lb{x}{t} \mid \ap{t}{u} \mid \pa{t}{u} \mid \pl{t} \mid \pr{t} \mid \dotsc
\end{equation*}
The associated typing rules are given on Table~\ref{tabtypprod}. Note the new rules are all term rules (no non-term rule added).
Lemmas~\ref{lemwk}, \ref{lemstr},
\ref{lemgenvar} and~\ref{lemsubst} still hold. It is also easy to check that the new rules do not break Propositions~\ref{propbetared} and~\ref{propbetaexp}.

\begin{table}
\centering
\begin{equation*}
\AIC{\A\vdash t:A}
\AIC{\A\vdash u:B}
\PAIR{\A\vdash\pa{t}{u}:A\times B}
\DP
\qquad\qquad
\AIC{\A\vdash t:A\times B}
\PROJL{\A\vdash\pl{t}:A}
\DP
\qquad\qquad
\AIC{\A\vdash t:A\times B}
\PROJR{\A\vdash\pr{t}:B}
\DP
\end{equation*}
\caption{Typing Rules for Product.}\label{tabtypprod}
\end{table}

\begin{lem}[Generation for Pairing]\label{lemgenpair}
Assuming that ($\pair$) is the only term rule introducing a pair, the only non-term rules are ($\leq$), ($\inter$) and ($\omg$), and that the axioms of Table~\ref{tabsub} are satisfied, we have:
if $\A\vdash\pa{t}{u}:A$ then there exist two families of types $(B_i)_{i\in I}$ and $(C_i)_{i\in I}$ with $\biginter_{i\in I}(B_i\times C_i)\leq A$ and, for each $i\in I$, $\A\vdash t:B_i$ and $\A\vdash u:C_i$.
\end{lem}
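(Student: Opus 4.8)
The plan is to imitate the structure of the generation lemmas for variables, application, and abstraction (Lemmas~\ref{lemgenvar}, \ref{lemgenapp}, \ref{lemgenabs}): proceed by induction on the derivation of $\A\vdash\pa{t}{u}:A$, case-splitting on the last rule applied. By hypothesis, the only rules that can conclude a judgement whose subject is $\pa{t}{u}$ are the term rule ($\pair$) and the three non-term rules ($\leq$), ($\inter$), ($\omg$). So there are exactly four cases.

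First I would do the base cases. If the last rule is ($\pair$), then $A$ is literally $B\times C$ with $\A\vdash t:B$ and $\A\vdash u:C$; take $I$ a singleton, $B_1=B$, $C_1=C$, and observe $B\times C\leq B\times C$ by~(\ref{axsubrefl}). If the last rule is ($\omg$), then $A=\omg$; take $I=\emptyset$, so that $\biginter_{i\in\emptyset}(B_i\times C_i)=\omg\leq\omg$ holds trivially (using that the empty intersection is $\omg$ up to the equivalence induced by $\leq$, as noted after Table~\ref{tabsub}), and the per-$i$ conditions are vacuous.

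Then the two inductive cases. If the last rule is ($\leq$), from $\A\vdash\pa{t}{u}:A'$ with $A'\leq A$, the induction hypothesis gives families $(B_i)_{i\in I}$, $(C_i)_{i\in I}$ with $\biginter_{i\in I}(B_i\times C_i)\leq A'$ and the termwise conditions; compose with $A'\leq A$ via~(\ref{axsubtrans}) to conclude with the same families. If the last rule is ($\inter$), from $\A\vdash\pa{t}{u}:A_1$ and $\A\vdash\pa{t}{u}:A_2$ with $A=A_1\inter A_2$, apply the induction hypothesis to each premise to get families indexed by $I_1$ and $I_2$; take $I=I_1\uplus I_2$ (disjoint union) with the obvious reindexing of the $B$'s and $C$'s, and use $\biginter_{i\in I_1}(B_i\times C_i)\leq A_1$ together with $\biginter_{i\in I_2}(B_i\times C_i)\leq A_2$ to derive $\biginter_{i\in I}(B_i\times C_i)\leq A_1\inter A_2$ by~(\ref{axsubinterll}), (\ref{axsubinterlr}) and~(\ref{axsubinterr}). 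The termwise conditions for $I$ follow immediately from those for $I_1$ and $I_2$.

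I do not expect a genuine obstacle here: the proof is entirely parallel to Lemmas~\ref{lemgenapp} and~\ref{lemgenabs}, with $A\imp B$ replaced by $B\times C$ and the absence of~(\ref{aximpinv}) meaning we never need an inversion principle on $\times$ — the statement only asserts the existence of the decomposing families, not any constraint derived from them. The only mildly delicate point is the bookkeeping in the ($\inter$) case (keeping the two index sets disjoint so the combined family is well-defined) and remembering, in the ($\omg$) case, that the empty intersection denotes $\omg$; both are routine given the remarks already made after Table~\ref{tabsub}.
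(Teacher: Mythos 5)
Your proposal is correct and follows essentially the same argument as the paper: induction on the derivation with the four cases ($\pair$), ($\leq$), ($\inter$), ($\omg$), using ($\trans$) in the ($\leq$) case, a disjoint union of index sets with the monotonicity of $\inter$ in the ($\inter$) case, and $I=\emptyset$ for ($\omg$).
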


\begin{proof}
  By induction on the typing derivation of $\A\vdash\pa{t}{u}:A$, by looking at each possible last rule which, by assumption, can only be ($\pair$), ($\leq$), ($\inter$) or ($\omg$):
  \begin{itemize}
  \item ($\pair$) rule: $I$ is a singleton and the result is immediate.
  \item ($\leq$) rule: we have $\A\vdash\pa{t}{u}:A'$ with $A'\leq A$, we apply the induction hypothesis to $\A\vdash\pa{t}{u}:A'$, and we conclude by ($\trans$).
  \item ($\inter$) rule: we have $A=A'\inter A''$, by induction hypotheses we obtain families of types indexed by $I'$ and $I''$ and we consider $I=I'\uplus I''$. We conclude by using $E_1\leq E_2\;\wedge\; F_1\leq F_2\;\Rightarrow\; E_1\inter F_1\leq E_2\inter F_2$.
  \item ($\omg$) rule: we simply choose $I=\emptyset$.
\qedhere
  \end{itemize}
\end{proof}

\begin{lem}[Generation for Left Projection]\label{lemgenprojl}
Assuming that ($\projl$) is the only term rule introducing a left projection, the only non-term rules are ($\leq$), ($\inter$) and ($\omg$), and that the axioms of Table~\ref{tabsub} are satisfied, we have:
if $\A\vdash\pl{t}:A$ then there exist two families of types $(B_i)_{i\in I}$ and $(C_i)_{i\in I}$ with $\biginter_{i\in I}B_i\leq A$ and, for each $i\in I$, $\A\vdash t:B_i\times C_i$.
\end{lem}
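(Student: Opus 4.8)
The plan is to mimic exactly the structure of the proof of Lemma~\ref{lemgenpair}: argue by induction on the typing derivation of $\A\vdash\pl{t}:A$, inspecting the last rule, which by hypothesis is one of ($\projl$), ($\leq$), ($\inter$), ($\omg$). The only genuinely new ingredient compared to the pairing case is that the ``shape'' asserted in the conclusion is the family of left components $B_i$ (with the $C_i$ playing a purely auxiliary role), so the bookkeeping is slightly asymmetric; nothing deeper is involved.

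First, for the base case ($\projl$): the premise is $\A\vdash t:B\times C$ for some $B,C$ with $A=B$, so take $I$ a singleton, $B_1=B$, $C_1=C$, and conclude with (\ref{axsubrefl}). For the ($\leq$) case: the premise is $\A\vdash\pl{t}:A'$ with $A'\leq A$; apply the induction hypothesis to get families $(B_i)_{i\in I}$, $(C_i)_{i\in I}$ with $\biginter_{i\in I}B_i\leq A'$, and conclude $\biginter_{i\in I}B_i\leq A$ by (\ref{axsubtrans}). For the ($\inter$) case: $A=A'\inter A''$ with $\A\vdash\pl{t}:A'$ and $\A\vdash\pl{t}:A''$; the induction hypotheses give families over $I'$ and $I''$ with $\biginter_{i\in I'}B_i\leq A'$ and $\biginter_{i\in I''}B_i\leq A''$; set $I=I'\uplus I''$, keep the corresponding $B_i$ and $C_i$, and use $E_1\leq E_2\wedge F_1\leq F_2\Rightarrow E_1\inter F_1\leq E_2\inter F_2$ together with $\biginter_{i\in I}B_i\leq\biginter_{i\in I'}B_i$ and $\biginter_{i\in I}B_i\leq\biginter_{i\in I''}B_i$ (consequences of (\ref{axsubinterll}), (\ref{axsubinterlr}), (\ref{axsubinterr})) to obtain $\biginter_{i\in I}B_i\leq A'\inter A''=A$. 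For the ($\omg$) case: $A=\omg$, so take $I=\emptyset$ and use (\ref{axsubomgr}), since $\biginter_{i\in\emptyset}B_i\leq\omg$.

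I do not expect any real obstacle here: each case is closed by one of the kernel subtyping axioms of Table~\ref{tabsub}, exactly as in Lemma~\ref{lemgenpair}. The only point demanding a little care is the ($\inter$) case, where one must be explicit that the $C_i$ from the two sub-families are simply carried along unchanged and that the final inequality is obtained by combining the two induction hypotheses monotonically with respect to $\inter$; this is the same move used for pairing and requires nothing beyond the derived monotonicity of $\inter$.
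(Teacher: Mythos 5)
Your proposal is correct and follows essentially the same route as the paper: induction on the typing derivation with case analysis on the last rule, a singleton index set for ($\projl$), transitivity for ($\leq$), a disjoint union of index sets closed by the derived monotonicity of $\inter$ for ($\inter$), and the empty family for ($\omg$). The extra detail you give in the ($\inter$) case is a harmless elaboration of the same argument.
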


\begin{proof}
  By induction on the typing derivation of $\A\vdash\pl{t}:A$, by looking at each possible last rule which, by assumption, can only be ($\projl$), ($\leq$), ($\inter$) or ($\omg$):
  \begin{itemize}
  \item ($\projl$) rule: $I$ is a singleton and the result is immediate.
  \item ($\leq$) rule: we have $\A\vdash\pl{t}:A'$ with $A'\leq A$, we apply the induction hypothesis to $\A\vdash\pl{t}:A'$, and we conclude by ($\trans$).
  \item ($\inter$) rule: we have $A=A'\inter A''$, by induction hypotheses we obtain families of types indexed by $I'$ and $I''$ and we consider $I=I'\uplus I''$. We conclude by using $E_1\leq E_2\;\wedge\; F_1\leq F_2\;\Rightarrow\; E_1\inter F_1\leq E_2\inter F_2$.
  \item ($\omg$) rule: we simply choose $I=\emptyset$.
\qedhere
  \end{itemize}
\end{proof}

\begin{lem}[Generation for Right Projection]
Assuming that ($\projr$) is the only term rule introducing a right projection, the only non-term rules are ($\leq$), ($\inter$) and ($\omg$), and that the axioms of Table~\ref{tabsub} are satisfied, we have:
if $\A\vdash\pr{t}:A$ then there exist two families of types $(B_i)_{i\in I}$ and $(C_i)_{i\in I}$ with $\biginter_{i\in I}C_i\leq A$ and, for each $i\in I$, $\A\vdash t:B_i\times C_i$.
\end{lem}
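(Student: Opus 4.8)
The plan is to prove this exactly as Lemma~\ref{lemgenprojl} was proved, since the statement is its mirror image: the $\projr$ typing rule extracts the \emph{second} component of a product type, so the accumulated family we track is $(C_i)_{i\in I}$ rather than $(B_i)_{i\in I}$, while the auxiliary family of first components is kept only to witness that each $t$ has some product type $B_i\times C_i$. First I would fix that the hypotheses of the lemma are met in the current system: ($\projr$) is indeed the only term rule introducing $\pr{\_}$, and the only non-term rules are still ($\leq$), ($\inter$) and ($\omg$) — the product typing rules of Table~\ref{tabtypprod} being all term rules, as already noted. Then I would argue by induction on the derivation of $\A\vdash\pr{t}:A$, inspecting the last rule, which by the above can only be one of four.

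In the ($\projr$) case the derivation ends with premise $\A\vdash t:B\times A$, so I take $I$ a singleton with $B_i=B$, $C_i=A$, and $\biginter_{i\in I}C_i=A\leq A$ by (\ref{axsubrefl}). In the ($\leq$) case we have $\A\vdash\pr{t}:A'$ with $A'\leq A$; the induction hypothesis yields families with $\biginter_{i\in I}C_i\leq A'$, and I conclude by (\ref{axsubtrans}). In the ($\inter$) case $A=A'\inter A''$, and the two induction hypotheses give families indexed by $I'$ and $I''$; I set $I=I'\uplus I''$, so that $\biginter_{i\in I}C_i$ is (up to the equivalence induced by $\leq$) $(\biginter_{i\in I'}C_i)\inter(\biginter_{i\in I''}C_i)$, and I close using $E_1\leq E_2\wedge F_1\leq F_2\Rightarrow E_1\inter F_1\leq E_2\inter F_2$ together with the two bounds $\biginter_{i\in I'}C_i\leq A'$ and $\biginter_{i\in I''}C_i\leq A''$. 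In the ($\omg$) case I simply take $I=\emptyset$, the empty intersection being $\leq\omg=A$ via (\ref{axsubomgr}) (and reflexivity).

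There is essentially no genuine obstacle here: the argument is purely by inspection of the last rule, and all the real work — the interaction of $\inter$ with $\leq$, the commutative/associative/idempotent structure making $\biginter$ meaningful — has been set up beforehand. The only point requiring a little care is the bookkeeping in the ($\inter$) case, namely reindexing by a disjoint union and applying the monotonicity of $\inter$ to the \emph{right} components; and one should double-check, exactly as in the surrounding lemmas, that when the full system of Tables~\ref{tabtyp}, \ref{tabtypimp} and~\ref{tabtypprod} is in play no further non-term rule has sneaked in, so that the case analysis on the last rule remains complete.
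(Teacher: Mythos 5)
Your proof is correct and follows exactly the route the paper intends: the paper's own proof is simply ``similar to Lemma~\ref{lemgenprojl}'', and your induction on the derivation with the same four-case analysis (($\projr$), ($\leq$), ($\inter$), ($\omg$)) is precisely that mirrored argument, tracking the right components $C_i$ instead of the left ones.
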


\begin{proof}
  Similar to the proof of Lemma~\ref{lemgenprojl}.
\end{proof}

We consider the reduction $\pred$ to be the congruence generated by:
\begin{equation*}
  \pl{\pa{t}{u}}\pred t \hspace{3cm} \pr{\pa{t}{u}}\pred u
\end{equation*}
Similarly to the arrow case, we ask for an additional property of the subtyping relation in order to deduce subject reduction:
\begin{equation*}
  \biginter_{i\in I}(A_i\times B_i)\leq A\times B\;\Rightarrow\;\biginter_{i\in I}A_i\leq A\;\wedge\;\biginter_{i\in I}B_i\leq B  \tag{\ensuremath{\times\leq\times}}\label{axprodinv}
\end{equation*}

\begin{prop}[Subject Reduction for Products]
  Assuming (\ref{axprodinv}), if $t_1\pred t_2$ and $\A\vdash t_1:A$ then $\A\vdash t_2:A$.
\end{prop}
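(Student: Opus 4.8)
The plan is to mirror the proof of Proposition~\ref{propbetared}, replacing the analysis of the $\beta$-redex by an analysis of the two $\pi$-redexes, and to handle the congruence closure by the same structural-induction scheme. So I would argue by induction on the derivation of $\A\vdash t_1:A$: if the contracted redex sits at the root we are in one of the two key cases below; otherwise the head constructor of $t_1$ coincides with that of $t_2$ and a single immediate subterm is being reduced, which is the congruence case.

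For the key case $\pl{\pa{t}{u}}\pred t$: from $\A\vdash\pl{\pa{t}{u}}:A$, Lemma~\ref{lemgenprojl} yields families $(B_i)_{i\in I}$ and $(C_i)_{i\in I}$ with $\biginter_{i\in I}B_i\leq A$ and $\A\vdash\pa{t}{u}:B_i\times C_i$ for every $i\in I$. Applying Lemma~\ref{lemgenpair} to each of these gives, for each $i\in I$, families $(B'_j)_{j\in J_i}$ and $(C'_j)_{j\in J_i}$ with $\biginter_{j\in J_i}(B'_j\times C'_j)\leq B_i\times C_i$ and $\A\vdash t:B'_j$ (as well as $\A\vdash u:C'_j$) for every $j\in J_i$. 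Now (\ref{axprodinv}) gives $\biginter_{j\in J_i}B'_j\leq B_i$; an ($\inter$) step over $j\in J_i$ followed by ($\leq$) turns the judgements $\A\vdash t:B'_j$ into $\A\vdash t:B_i$; then an ($\inter$) step over $i\in I$ followed by ($\leq$) with $\biginter_{i\in I}B_i\leq A$ gives $\A\vdash t:A$. The case $\pr{\pa{t}{u}}\pred u$ is entirely symmetric, using the generation lemma for right projection and the right-hand conjunct $\biginter_{j\in J_i}C'_j\leq C_i$ of (\ref{axprodinv}).

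For the congruence cases the head constructor of $t_1$ coincides with that of $t_2$ and is an abstraction, an application, a pair, a left projection or a right projection (a variable has no subterm, so that case is vacuous). In each case I invoke the matching generation lemma (Lemma~\ref{lemgenabs}, \ref{lemgenapp}, \ref{lemgenpair}, \ref{lemgenprojl}, or the generation lemma for right projection) to split the derivation into a family of judgements about the immediate subterms of $t_1$ together with one subtyping step, apply the induction hypothesis to whichever immediate subterm is being reduced in each member of the family, and then recompose with the corresponding term rule, an ($\inter$) step and a ($\leq$) step; these cases need only the axioms of Table~\ref{tabsub}. Since all the product typing rules of Table~\ref{tabtypprod} are term rules, the hypotheses of every generation lemma remain satisfied.

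I expect no serious obstacle: the genuine content is already in the generation lemmas and in (\ref{axprodinv}), and, in contrast with the $\beta$-case, no substitution occurs so Lemma~\ref{lemsubst} is not used here. The only points deserving care are the bookkeeping with the doubly-indexed families $(B'_j)_{j\in J_i}$, and the degenerate subcases $I=\emptyset$ or $J_i=\emptyset$, where $\biginter$ collapses to $\omg$ and one simply types the relevant subterm with the ($\omg$) rule, so the argument still closes. One should also set up the congruence induction (on the derivation, as above) so that it is manifestly well-founded.
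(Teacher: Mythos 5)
Your handling of the two key redex cases is exactly the paper's proof: Lemma~\ref{lemgenprojl} (resp.\ its right-projection analogue), then Lemma~\ref{lemgenpair} for each $i\in I$, then (\ref{axprodinv}) to get $\biginter_{j\in J_i}B'_j\leq B_i$, and recomposition with ($\inter$) and ($\leq$); the paper only treats this key case, leaving the congruence closure implicit. Your congruence discussion is fine in substance, with the one caveat that it should be organised as an induction on the reduction $t_1\pred t_2$ (or on the term) rather than on the typing derivation, since the judgements produced by the generation lemmas are not subderivations and so would not support an induction hypothesis measured by the typing derivation.
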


\begin{proof}
The key case is $\pl{\pa{t}{u}}\pred t$.
If $\A\vdash\pl{\pa{t}{u}}:A$, by Lemma~\ref{lemgenprojl}, we have two families $(B_i)_{i\in I}$ and $(C_i)_{i\in I}$ with $\biginter_{i\in I}B_i\leq A$ and, for each $i\in I$, $\A\vdash\pa{t}{u}:B_i\times C_i$. For each $i\in I$, by Lemma~\ref{lemgenpair}, we have two families $(B'_j)_{j\in J_i}$ and $(C'_j)_{j\in J_i}$ with $\biginter_{j\in J_i}(B'_j\times C'_j)\leq B_i\times C_i$ and, for each $j\in J_i$, $\A\vdash t:B'_j$ and $\A\vdash u:C'_j$. By (\ref{axprodinv}), $\biginter_{j\in J_i}B'_j\leq B_i$.
We conclude by:
\begin{prooftree}
  \AIC{\dotsb}
  \AIC{\dotsb}
  \AIC{\A\vdash t:B'_j}
  \AIC{\dotsb}
  \RL{\inter}
  \TIC{\A\vdash t:\biginter_{j\in J_i}B'_j}
  \AIC{\biginter_{j\in J_i}B'_j\leq B_i}
  \LEQ{\A\vdash t:B_i}
  \AIC{\dotsb}
  \RL{\inter}
  \TIC{\A\vdash t:\biginter_{i\in I}B_i}
  \AIC{\biginter_{i\in I}B_i\leq A}
  \LEQ{\A\vdash t:A}
\end{prooftree}
\end{proof}

\begin{prop}[Subject Expansion for Products]
  If $t_1\pred t_2$ and $\A\vdash t_2:A$ then $\A\vdash t_1:A$.
\end{prop}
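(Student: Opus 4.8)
The plan is to mirror the proof of Proposition~\ref{propbetaexp}, noting that the argument is actually lighter here since no substitution is involved. Because $\pred$ is the congruence generated by the two root rules $\pl{\pa{t}{u}}\pred t$ and $\pr{\pa{t}{u}}\pred u$, I would argue by induction on the derivation of $t_1\pred t_2$. The contextual cases are routine and follow the very same pattern already used for subject reduction: one peels off the head constructor of $t_1$ using the matching generation lemma (Lemmas~\ref{lemgenabs}, \ref{lemgenapp}, \ref{lemgenpair}, \ref{lemgenprojl}, or the generation lemma for right projection), applies the induction hypothesis to the strictly smaller reduction taking place in the relevant immediate subterm, rebuilds a typing for that subterm with the corresponding term rule, intersects the resulting family of types and concludes with~($\leq$); the degenerate case of an empty family is handled directly by~($\omg$). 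The only genuine content is therefore the two root cases.

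For the key case $\pl{\pa{t}{u}}\pred t$, assume $\A\vdash t:A$. The only ingredient missing in order to reconstruct $\pl{\pa{t}{u}}$ is a type for the component $u$ that the reduction discards, and for this the trivial choice works, since $\A\vdash u:\omg$ holds by rule~($\omg$). One then concludes with:
\begin{prooftree}
\AIC{\A\vdash t:A}
\OMG{\A\vdash u:\omg}
\PAIR{\A\vdash\pa{t}{u}:A\times\omg}
\PROJL{\A\vdash\pl{\pa{t}{u}}:A}
\end{prooftree}
The case $\pr{\pa{t}{u}}\pred u$ is symmetric: from $\A\vdash u:A$ one uses $\A\vdash t:\omg$, then rule~($\pair$) to obtain $\A\vdash\pa{t}{u}:\omg\times A$, and finally rule~($\projr$).

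Since the statement puts no requirement on the subtyping relation beyond the kernel axioms of Table~\ref{tabsub} (which are in any case needed for the generation lemmas invoked in the contextual cases), there is no real obstacle. The only point deserving a moment of care is, exactly as in the $\beta$ case, to remember that the ``forgotten'' part of a redex can always be typed with $\omg$, so that the reconstruction of $t_1$ from a typing of $t_2$ goes through unconditionally.
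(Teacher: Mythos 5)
Your proof is correct and follows essentially the same route as the paper: the key case $\pl{\pa{t}{u}}\pred t$ is handled by exactly the derivation the paper gives (typing the discarded component $u$ with $\omg$, then ($\pair$) and ($\projl$)), with the symmetric and contextual cases being routine. The extra detail you supply on the congruence cases is fine but not a different approach.
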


\begin{proof}
The key case is $\pl{\pa{t}{u}}\pred t$.
We have:
  \begin{prooftree}
    \AIC{\A\vdash t:A}
    \OMG{\A\vdash u:\omg}
    \PAIR{\A\vdash\pa{t}{u}:A\times\omg}
    \PROJL{\A\vdash\pl{\pa{t}{u}}:A}
  \end{prooftree}
\end{proof}

Following~\cite{intermixin} in extending \BCD{} subtyping in the context of additional type constructors, we can consider the rules of Table~\ref{tabsubbcdprod} for subtyping with products. This system satisfies property~(\ref{axprodinv}) (Lemma~\ref{leminvprod}).

\begin{table}
\centering
\begin{equation*}
\AIC{A\leq C}
\AIC{B\leq D}
\BIC{A\times B\leq C\times D}
\DP
\qquad\qquad
\ZIC{(A\times B)\inter(C\times D)\leq (A\inter C)\times(B\inter D)}
\DP
\end{equation*}
\caption{\BCD-Style Subtyping Rules for Products.}\label{tabsubbcdprod}
\end{table}

While the present section focused on the product extension of \BCD, our purpose is to use it as a concrete application of a more general pattern of subtyping between types which include intersection as well as other type constructors. What should be remembered from what we have done so far, is that we can get subject reduction and subject expansion as soon as the subtyping relation satisfies Table~\ref{tabsub} as well as (\ref{aximpinv}) and (\ref{axprodinv}). The next section provides a general approach to these results.

\section{Intersection Subtyping}
\label{secsub}

Inspired by~\cite{intermixin}, we directly consider types built with an arbitrary set of constructors. The case of $\times$ for example will be obtained as a particular instance. We go in fact one step further than~\cite{intermixin} by allowing enough generality in the treatment of constructors so that $\imp$ appears as a constructor among others and not as a specific one as given in~\cite{intermixin}.

\subsection{Generic Subtyping with Constructors}

We consider a given set $\mathcal{K}$ of \emph{type constructors} (denoted $\constr$, $\constr_1$, $\constr_2$, etc) which come with a \emph{contravariant arity} $\convar{\constr}$ and a \emph{covariant arity} $\covar{\constr}$. We assume that arities are respected when constructing types, so that if $\convar{\constr}=2$  and $\covar{\constr}=1$, then $\constr(A,B;C)$ is a type when $A$, $B$ and $C$ are three types. Moreover, for each constructor $\constr$, a $\{0,1\}$-value $\width{\constr}$ defines its behaviour with respect to top types (see below).

Types are thus generated through:
  \begin{equation*}
        A,B ::= A\inter B \mid \constr(\vec A;\vec B)
  \end{equation*}
Base types are provided by constructors with zero arities.

We introduce a sequent-calculus-style derivation system \IScons{} to define the subtyping relation on these types. We will show that applying proof-theoretical methods, such as cut elimination, allows us to deduce easily some properties of subtyping such as Lemma~\ref{leminv}.

Sequents are of the shape $\A\vdash A$ where $\A$ is a (possibly empty) \emph{list} of types.
The intended meaning is:
\begin{equation*}
A_1,\dotsc,A_k\vdash B \text{\quad ``means''\quad}
 A_1\inter\dotsb\inter A_k\leq B
\qquad\qquad\text{(thus if $k=0$, $B$ is a top type)}.
\end{equation*}
The derivation rules are given in Table~\ref{tabiscons} and satisfy the subformula property.

\begin{table}
\centering
\begin{gather*}
\AIC{\A,\B\vdash C}
\ISCWK{\A,\constr(\vec A;\vec B),\B\vdash C}
\DP
\qquad\qquad
\AIC{\A\vdash A}
\AIC{\A\vdash B}
\ISCINTERR{\A\vdash A\inter B}
\DP
\qquad\qquad
\AIC{\A,A,B,\B\vdash C}
\ISCINTERL{\A,A\inter B,\B\vdash C}
\DP
\\[2ex]
\AIC{\begin{array}[b]{c}A_1\vdash A_1^1\quad\dotsb\quad A_1\vdash A_1^k \\ \vdots\\ A_{\convar{\constr}}\vdash A_{\convar{\constr}}^1\quad\dotsb\quad A_{\convar{\constr}}\vdash A_{\convar{\constr}}^k\end{array}}
\AIC{\begin{array}[b]{c}B_1^1,\dotsc,B_1^k\vdash B_1\\ \vdots \\ B_{\covar{\constr}}^1,\dotsc,B_{\covar{\constr}}^k\vdash B_{\covar{\constr}}\end{array}}
\AIC{\width{\constr}\leq k}
\ISCC{\constr(A_1^1,\dotsc,A_{\convar{\constr}}^1;B_1^1,\dotsc,B_{\covar{\constr}}^1),\dotsc,\constr(A_1^k,\dotsc,A_{\convar{\constr}}^k;B_1^k,\dotsc,B_{\covar{\constr}}^k)\vdash \constr(A_1,\dotsc,A_{\convar{\constr}};B_1,\dotsc,B_{\covar{\constr}})}
\DP
\end{gather*}
\caption{\IScons{} Deduction System.}\label{tabiscons}
\end{table}

\begin{prop}[Admissible Rules]\label{propisconsadmiss}
The following rules are admissible in \IScons:
\begin{gather*}
\begin{array}{c}
  \AIC{\A\vdash C}
  \ISCEXC{\A'\vdash C}
  \DP\\[2ex]
  \text{$\A'$ permutation of $\A$}
\end{array}
\qquad\qquad
  \AIC{\A,\B\vdash C}
  \ISCWKg{\A,A,\B\vdash C}
  \DP
\qquad\qquad
  \ISCAX{A\vdash A}
  \DP
\\[4ex]
  \AIC{\A,A\inter B,\B\vdash C}
  \ISCINTERLe{\A,A,B,\B\vdash C}
  \DP
\qquad\qquad
  \AIC{\A,A,A,\B\vdash C}
  \ISCCO{\A,A,\B\vdash C}
  \DP
\qquad\qquad
  \AIC{\A\vdash A}
  \AIC{\B,A,\C\vdash C}
  \ISCCUT{\B,\A,\C\vdash C}
  \DP
\end{gather*}  
\end{prop}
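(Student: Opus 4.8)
The plan is to prove all six admissible rules, treating them in the order listed since later ones depend on earlier ones. First I would establish the exchange rule $(\exc)$ by a straightforward induction on the derivation: each rule of \IScons{} acts on a designated element (or pair of adjacent elements) of the left-hand list while leaving the surrounding context arbitrary, so a permutation of the conclusion's context can always be matched by permuting contexts in the premises. The only mild care needed is for $(\constrr)$, where the conclusion's left list is a \emph{sequence} of $k$ copies of a $\constr$-headed type; but a permutation of that list just permutes the index set $\{1,\dots,k\}$, which permutes the premises accordingly, and $\width{\constr}\le k$ is unaffected. Generalized weakening $(\wkg)$ adding an \emph{arbitrary} type $A$ is proved by induction on the structure of $A$: if $A = A_1\inter A_2$, use $(\interl)$ after weakening in $A_1,A_2$ by two uses of the induction hypothesis, combined with $(\interle)$ — except that $(\interle)$ is not yet available, so instead I would prove $(\wkg)$ and $(\interle)$ together, or more simply prove $(\wkg)$ first only for $\constr$-headed $A$ (that is exactly the $(\wk)$ rule of the system, reading it with empty covariant data appropriately) — actually the cleanest route is: prove $(\wkg)$ by induction on $A$, handling $A=\constr(\vec A;\vec B)$ by the primitive $(\wk)$ rule and $A=A_1\inter A_2$ by applying the induction hypothesis twice and then $(\interl)$; this uses only primitive rules.

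Next, the axiom $A\vdash A$ is proved by induction on $A$: for $A=A_1\inter A_2$, derive $A_1,A_2\vdash A_1$ and $A_1,A_2\vdash A_2$ from $A_i\vdash A_i$ by $(\wkg)$ and exchange, then $(\interr)$ to get $A_1,A_2\vdash A_1\inter A_2$, then $(\interl)$; for $A=\constr(A_1,\dots;B_1,\dots)$, apply $(\constrr)$ with $k=1$, using the induction hypotheses $A_i\vdash A_i$ as the contravariant premises and $B_j\vdash B_j$ as the covariant premises (the side condition $\width{\constr}\le 1$ holds since $\width{\constr}\in\{0,1\}$). The rule $(\interle)$ — splitting $A\inter B$ on the left into $A,B$ — follows from $(\interl)$ composed with two instances of the (yet-to-be-proved) cut, or more directly: from $\A,A\inter B,\B\vdash C$ we want $\A,A,B,\B\vdash C$; but note $A,B\vdash A\inter B$ holds (by the axiom argument above: $(\interr)$ on $A,B\vdash A$ and $A,B\vdash B$, obtained via $(\wkg)$), so $(\interle)$ is an instance of cut with the cut formula $A\inter B$. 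Similarly contraction $(\co)$ follows from cut using $A\vdash A\inter A$ — wait, more carefully: from $\A,A,A,\B\vdash C$, since $A\vdash A$ and we can weaken, we get... the clean statement is that $(\co)$ is cut against $A\vdash A$ after suitable weakening, or directly: $A, A \vdash A$ hence by $(\interr)$-free reasoning... I would simply derive $(\co)$ from cut: it is cut of $A\vdash A$ into the second $A$ slot is trivial; the real content is that cut lets us turn two hypotheses $A,A$ into one, which is exactly $(\co)$ with cut formula $A$ fed by the derivable $A\vdash A$.

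So the crux is \textbf{cut admissibility}: if $\A\vdash A$ and $\B,A,\C\vdash C$ then $\B,\A,\C\vdash C$. I expect this to be the main obstacle. The proof is by a lexicographic induction on the pair (size of the cut formula $A$, sum of the sizes of the two derivations), by case analysis on the last rules of both premises. The commuting cases — where $A$ is not principal in the right premise, or not in the left premise — are handled by permuting the cut above that rule, using exchange to keep $\A$ contiguous; a subtlety is that the right premise's last rule might be $(\wk)$ erasing exactly the cut occurrence of $A$, in which case we erase $\A$ instead using $(\wkg)$ (hence $(\wkg)$ must precede cut, as arranged). The key logical cases are: $A = A_1\inter A_2$, with $(\interr)$ on the left and $(\interl)$ on the right — reduce to two cuts on the smaller formulas $A_1$ and $A_2$ (first cut $A_2$, then $A_1$, threading contexts by exchange); and $A = \constr(\vec A;\vec B)$ with $(\constrr)$ on both sides — here the left premise, having a single $\constr$-headed type on the right, was derived by $(\constrr)$ with some width $k'$, and the right premise consumes $A$ as one of the $k$ principal formulas of another $(\constrr)$; we must splice the contravariant premises ($A_i\vdash A_i^\ell$ against $A\text{'s components}\vdash\dots$ — these combine by cuts on the strictly smaller component types) and dually the covariant premises, then reassemble via one $(\constrr)$ whose width side-condition is inherited. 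I would write out the $\constr$/$\constr$ case carefully since the bookkeeping of the doubly-indexed premise families is where an error is most likely; everything else is routine permutation of rules.
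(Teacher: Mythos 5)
Your treatment of ($\exc$), ($\wkg$) and ($\ax$) matches the paper's (induction on the derivation, on $A$, and on $A$ using ($\wkg$), respectively), and your overall cut-elimination strategy (lexicographic induction, commuting cases, the ($\wk$)-erases-the-cut-formula subtlety handled by ($\wkg$), the $\constrr$/$\constrr$ splice) is the right one. But there is a genuine gap in how you dispose of ($\interle$) and, above all, of contraction ($\co$). In the key principal case of cut --- ($\interr$) on the left, ($\interl$) on the right, cut formula $A_1\inter A_2$ --- the two smaller cuts you describe produce $\B,A_1,\A,\C\vdash C$ and then $\B,\A,\A,\C\vdash C$: the left context $\A$ gets \emph{duplicated}, and ``threading contexts by exchange'' cannot remove the duplicate block. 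You need an already-established admissible contraction to finish, which is exactly why the paper proves ($\interle$) (by induction on the premise) and then ($\co$) (by lexicographic induction on the size of the contracted formula and the height of the derivation, with key case ($\interl$), using ($\interle$) and ($\exc$)) \emph{before} attacking ($\cut$).

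Your proposed shortcut --- obtaining ($\co$) ``from cut with cut formula $A$ fed by $A\vdash A$'' --- does not work even setting aside the circularity: the cut rule as stated is multiplicative, replacing one occurrence of the cut formula in the right premise by the left premise's context, so cutting $A\vdash A$ into one of the two occurrences of $A$ in $\A,A,A,\B\vdash C$ just returns $\A,A,A,\B\vdash C$; no instance of this cut can merge two hypotheses into one. And since your cut proof needs contraction (as above), deriving ($\co$) from cut would be circular in any case. The same ordering issue affects ($\interle$): deriving it by cutting against the derivable $A,B\vdash A\inter B$ is fine once cut is available, but the paper needs ($\interle$) inside the proof of ($\co$), which is needed inside cut, so you should prove ($\interle$) directly by induction on the premise (it is straightforward: every rule commutes, and the case where $A\inter B$ is the formula introduced by ($\interl$) is immediate). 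With ($\interle$) and ($\co$) established independently in this way, the rest of your plan goes through as in the paper.
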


\begin{proof}
($\exc$) is obtained by induction on the proof of the premise.
($\wkg$) is obtained by induction on $A$.
($\ax$) is obtained by induction on $A$ using ($\wkg$).
($\interle$) is obtained by induction on the premise.
 
($\co$) is obtained by induction on the lexicographically ordered pair (size of $A$, height of the proof of the premise), by looking at each possible last rule of the premise. The key case is ($\interl$):
\begin{prooftree}
  \AIC{\A,A,B,A\inter B,\B\vdash C}
  \ISCINTERL{\A,A\inter B,A\inter B,\B\vdash C}
\end{prooftree}
we apply ($\interle$) and ($\exc$) to the premise to get $\A,A,A,B,B,\B\vdash C$ and we use the induction hypothesis twice.
 
($\cut$) is obtained by induction on the lexicographically ordered triple (size of $A$, height of the proof of the left premise, height of the proof of the right premise), by looking at possible last rules of the premises. Let us focus on the main cases:
    \begin{itemize}
    \item ($\interr$) rule on the right:
      \begin{equation*}
\scalebox{0.7}{
        \AIC{\pi_1}
        \noLine
        \UIC{\A\vdash A}
        \AIC{\pi_2}
        \noLine
        \UIC{\B,A,\C\vdash B}
        \AIC{\pi_3}
        \noLine
        \UIC{\B,A,\C\vdash C}
        \ISCINTERR{\B,A,\C\vdash B\inter C}
        \ISCCUT{\B,\A,\C\vdash B\inter C}
        \DP
$\qquad\rightsquigarrow\qquad$
        \AIC{\pi_1}
        \noLine
        \UIC{\A\vdash A}
        \AIC{\pi_2}
        \noLine
        \UIC{\B,A,\C\vdash B}
        \ISCCUT{\B,\A,\C\vdash B}
        \AIC{\pi_1}
        \noLine
        \UIC{\A\vdash A}
        \AIC{\pi_3}
        \noLine
        \UIC{\B,A,\C\vdash C}
        \ISCCUT{\B,\A,\C\vdash C}
        \ISCINTERR{\B,\A,\C\vdash B\inter C}
        \DP}
      \end{equation*}
we use the induction hypothesis twice with a decreasing height on the right.
    \item ($\interr$) rule on the left and $(\interl)$ rule on the right:
      \begin{equation*}
\scalebox{0.7}{
        \AIC{\pi_1}
        \noLine
        \UIC{\A\vdash A}
        \AIC{\pi_2}
        \noLine
        \UIC{\A\vdash B}
        \ISCINTERR{\A\vdash A\inter B}
        \AIC{\pi_3}
        \noLine
        \UIC{\B,A,B,\C\vdash C}
        \ISCINTERL{\B,A\inter B,\C\vdash C}
        \ISCCUT{\B,\A,\C\vdash C}
        \DP
$\qquad\rightsquigarrow\qquad$
        \AIC{\pi_1}
        \noLine
        \UIC{\A\vdash A}
        \AIC{\pi_2}
        \noLine
        \UIC{\A\vdash B}
        \AIC{\pi_3}
        \noLine
        \UIC{\B,A,B,\C\vdash C}
        \ISCCUT{\B,A,\A,\C\vdash C}
        \ISCCUT{\B,\A,\A,\C\vdash C}
        \doubleLine
        \ISCEXC{\dotsb}
        \doubleLine
        \ISCCO{\B,\A,\C\vdash C}
        \DP}
      \end{equation*}
we use the induction hypothesis twice with smaller cut formulas.
    \item ($\constrr$) rules on both sides (in which we only write the key parts):
      \begin{gather*}
\scalebox{0.65}{
\AIC{\begin{array}{c}\dotsb\quad A_1\vdash A_1^i\quad\dotsb  \\ \vdots\\ \dotsb\quad A_{\convar{}}\vdash A_{\convar{}}^i\quad\dotsb\end{array}}
\AIC{\begin{array}{c}\dotsc,B_1^i,\dotsc\vdash B_1\\ \vdots \\ \dotsc,B_{\covar{}}^i,\dotsc\vdash B_{\covar{}}\end{array}}
\RL{\constrr}
\BIC{\dotsc,\constr(A_1^i,\dotsc,A_{\convar{}}^i;B_1^i,\dotsc,B_{\covar{}}^i),\dotsc\vdash \constr(A_1,\dotsc,A_{\convar{}};B_1,\dotsc,B_{\covar{}})}
\AIC{\begin{array}{c}\dotsb\quad C_1\vdash A_1\quad\dotsb  \\ \vdots\\ \dotsb\quad C_{\convar{}}\vdash A_{\convar{}}\quad\dotsb\end{array}}
\AIC{\begin{array}{c}\dotsc,B_1,\dotsc\vdash D_1\\ \vdots \\ \dotsc,B_{\covar{}},\dotsc\vdash D_{\covar{}}\end{array}}
\RL{\constrr}
\BIC{\dotsc,\constr(A_1,\dotsc,A_{\convar{}};B_1,\dotsc,B_{\covar{}}),\dotsc\vdash \constr(C_1,\dotsc,C_{\convar{}};D_1,\dotsc,D_{\covar{}})}
\ISCCUT{\dotsc,\constr(A_1^i,\dotsc,A_{\convar{}}^i;B_1^i,\dotsc,B_{\covar{}}^i),\dotsc\vdash \constr(C_1,\dotsc,C_{\convar{}};D_1,\dotsc,D_{\covar{}})}
\DP}
\\[2ex]\rightsquigarrow\qquad
\scalebox{0.65}{
\AIC{\dotsb}
\AIC{C_p\vdash A_p}
\AIC{A_p\vdash A_p^i}
\ISCCUT{C_p\vdash A_p^i}
\AIC{\dotsb}
\AIC{\dotsc,B_p^i,\dotsc\vdash B_p}
\AIC{\dotsc,B_p,\dotsc\vdash D_p}
\ISCCUT{\dotsc,B_p^i,\dotsc\vdash D_p}
\AIC{\dotsb}
\RL{\constrr}
\QuIC{\dotsc,\constr(A_1^i,\dotsc,A_{\convar{}}^i;B_1^i,\dotsc,B_{\covar{}}^i),\dotsc\vdash \constr(C_1,\dotsc,C_{\convar{}};D_1,\dotsc,D_{\covar{}})}
\DP}
      \end{gather*}
we use the induction hypothesis many times (always with smaller cut formulas).
\qedhere
    \end{itemize}
\end{proof}

Note a $0$-ary constructor $\constr$ behaves like an atomic type (\ie{} either a type constant or a type variable) if $\width{\constr}=1$, and defines a top type if $\width{\constr}=0$:
  \begin{prooftree}
    \AIC{\width{\constr}\leq 0}
    \RL{\constrr}
    \UIC{\vdash\constr}
    \ISCWKg{A\vdash\constr}
  \end{prooftree}
In particular the types obtained with such $0$-ary constructors $\constr$ such that $\width{\constr}=0$ are all equivalent and we denote them $\omg$.
More generally, $\width{\constr}$ controls whether $\constr$ distributes over $\omg$ or not. In the case of a constructor with unary covariant arity, $\width{\constr}$ determines whether $\constr(\vec{A};\omg)=\omg$ or not.

\begin{prop}[Kernel Properties]\label{propmin}
  If we define $A\leq B$ as $A\vdash B$ in \IScons, the axioms of Table~\ref{tabsub} are satisfied.
\end{prop}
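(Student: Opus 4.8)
The plan is to verify the six properties of Table~\ref{tabsub} one at a time, in each case exhibiting a short \IScons{} derivation assembled from the rules of Table~\ref{tabiscons} together with the admissible rules supplied by Proposition~\ref{propisconsadmiss}.

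Reflexivity~(\ref{axsubrefl}), namely $A\vdash A$, is literally the admissible ($\ax$) rule. Transitivity~(\ref{axsubtrans}) is the only case that rests on substantial earlier work: from $A\vdash B$ and $B\vdash C$, one application of the admissible ($\cut$) rule with empty side contexts and cut formula $B$ gives $A\vdash C$; in other words, transitivity of the subtyping relation is nothing but cut admissibility.

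For~(\ref{axsubinterr}), the rule ($\interr$) applied to $C\vdash A$ and $C\vdash B$ directly yields $C\vdash A\inter B$. For~(\ref{axsubinterll}), I would start from $A\vdash A$ (admissible ($\ax$)), insert $B$ to its right with the admissible ($\wkg$) rule --- the primitive ($\wk$) rule does not suffice here since $B$ need not be constructor-headed --- obtaining $A,B\vdash A$, and then apply ($\interl$) to conclude $A\inter B\vdash A$; property~(\ref{axsubinterlr}) is symmetric, starting from $B\vdash B$ and inserting $A$ on its left.

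Finally, for~(\ref{axsubomgr}), recall that $\omg$ stands for a $0$-ary constructor $\constr$ with $\width{\constr}=0$. Instantiating the ($\constrr$) rule with $k=0$ (permitted since $\width{\constr}=0\leq 0$) gives the premise-free sequent $\vdash\constr$ with empty left-hand list, and one further application of ($\wkg$) produces $A\vdash\omg$, precisely the derivation displayed just before the statement. The only non-routine ingredient in the whole argument is the admissibility of ($\cut$), which has already been proved; everything else amounts to stacking a couple of rules, so I do not anticipate any real difficulty.
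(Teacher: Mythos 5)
Your proposal is correct and follows essentially the same route as the paper: ($\ax$) and ($\cut$) from Proposition~\ref{propisconsadmiss} give (\ref{axsubrefl}) and (\ref{axsubtrans}), ($\interr$) gives (\ref{axsubinterr}), the ($\ax$)/($\wkg$)/($\interl$) stack gives (\ref{axsubinterll}) and its symmetric variant, and the $k=0$ instance of ($\constrr$) followed by ($\wkg$) gives (\ref{axsubomgr}). Your remark that the primitive ($\wk$) rule does not suffice because $B$ need not be constructor-headed is a correct and worthwhile observation, matching the paper's use of the generalised weakening.
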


\begin{proof}
(\ref{axsubrefl}) and (\ref{axsubtrans}) correspond to ($\ax$) and ($\cut$) from Proposition~\ref{propisconsadmiss}. (\ref{axsubinterr}) is an instance of ($\interr$) and for (\ref{axsubinterll}) we have:
\begin{prooftree}
  \ISCAX{A\vdash A}
  \ISCWKg{A,B\vdash A}
  \ISCINTERL{A\inter B\vdash A}
\end{prooftree}
Finally, if we have a $0$-ary constructor $\omg$ with $\width{\omg}=0$, we have just seen it satisfies (\ref{axsubomgr}).
\end{proof}

\begin{lem}[Inversion]\label{leminv}
  If $\constr(A_1^1,\dotsc,A_{\convar{\constr}}^1;B_1^1,\dotsc,B_{\covar{\constr}}^1),\dotsc,\constr(A_1^k,\dotsc,A_{\convar{\constr}}^k;B_1^k,\dotsc,B_{\covar{\constr}}^k)\vdash \constr(A_1,\dotsc,A_{\convar{\constr}};B_1,\dotsc,B_{\covar{\constr}})$, there exists $\{i_1,\dotsc,i_p\}\subseteq\{1,\dotsc,k\}$ such that:
  \begin{equation*}
    A_1\vdash A_1^{i_1} \;\dotsb\; A_1\vdash A_1^{i_p} \quad\dotsb\quad A_{\convar{\constr}}\vdash A_{\convar{\constr}}^{i_1} \;\dotsb\; A_{\convar{\constr}}\vdash A_{\convar{\constr}}^{i_p}
    \quad\text{and}\quad
    B_1^{i_1},\dotsc,B_1^{i_p}\vdash B_1 \quad\dotsb\quad B_{\covar{\constr}}^{i_1},\dotsc,B_{\covar{\constr}}^{i_p}\vdash B_{\covar{\constr}}
  \end{equation*}
\end{lem}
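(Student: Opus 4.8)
The plan is to prove Lemma~\ref{leminv} by a cut-elimination-style analysis of the given derivation, exploiting the subformula property of \IScons{}. The key observation is that the sequent to be analysed has a very rigid shape: every type in the antecedent is headed by the same constructor $\constr$, the succedent is headed by $\constr$ as well, and there is no intersection anywhere in the sequent. I would proceed by induction on the height of the derivation and inspect its last rule. Since the succedent $\constr(\vec A;\vec B)$ is not an intersection, the last rule cannot be ($\interr$). Since no antecedent type is an intersection, the last rule cannot be ($\interl$). So the last rule is either ($\wk$) or ($\constrr$).

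In the ($\constrr$) case the conclusion is obtained directly in the form displayed in the rule, so one reads off the premises: $\width{\constr}\le k$ (with $k\ge 1$ as soon as the antecedent is nonempty, but if $\width{\constr}=0$ we may have $k=0$), and premises $A_m\vdash A_m^j$ for all contravariant positions $m$ and all $j\in\{1,\dotsc,k\}$, together with $B_1^1,\dotsc,B_1^k\vdash B_1$, etc. Taking $\{i_1,\dotsc,i_p\}=\{1,\dotsc,k\}$ gives exactly the conclusion of the lemma (one has to check the degenerate case where the antecedent list has length $0$, where the conclusion holds vacuously, using that then the covariant premises are $\vdash B_\ell$). In the ($\wk$) case, the last rule erases one of the antecedent occurrences $\constr(\vec{A^{i_0}};\vec{B^{i_0}})$, and the premise is again a sequent of the same rigid shape but with a shorter antecedent; the induction hypothesis yields a subset of the remaining indices, which is also a subset of $\{1,\dotsc,k\}$, and the required judgments are inherited unchanged.

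The only subtlety, and the step I expect to need the most care, is handling the way $\width{\constr}$ interacts with $k$ and with the empty-antecedent case: the hypothesis $\width{\constr}\le k$ in ($\constrr$) permits $k=0$ precisely when $\width{\constr}=0$, and then $p=0$ and the conclusion asserts (in particular) $\vdash B_\ell$ for each covariant component and nothing on the contravariant side, which is exactly what the ($\constrr$) premises give in that case. One must also make sure the inductive ($\wk$) case cannot shrink the antecedent below what is needed — but it cannot, since ($\wk$) only removes antecedent formulas and the statement quantifies existentially over a subset of the original index set, so fewer indices is fine. Apart from bookkeeping of indices, no genuine difficulty arises; the real work was already done in establishing cut admissibility (Proposition~\ref{propisconsadmiss}), which is what makes this ``subformula'' reasoning legitimate.
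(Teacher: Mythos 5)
Your proposal is correct and follows essentially the same route as the paper: the paper's proof is exactly an induction on the given derivation, observing that (since neither the succedent nor any antecedent formula is an intersection) only ($\wk$) and ($\constrr$) can occur as the last rule, with the two cases handled as you describe. Your additional bookkeeping about $\width{\constr}\leq k$ and the empty-antecedent case is sound and just makes explicit what the paper leaves implicit.
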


\begin{proof}
By induction on the derivation of $\constr(A_1^1,\dotsc,A_{\convar{\constr}}^1;B_1^1,\dotsc,B_{\covar{\constr}}^1),\dotsc,\constr(A_1^k,\dotsc,A_{\convar{\constr}}^k;B_1^k,\dotsc,B_{\covar{\constr}}^k)\vdash \constr(A_1,\dotsc,A_{\convar{\constr}};B_1,\dotsc,B_{\covar{\constr}})$, with only ($\wk$) and ($\constrr$) as possible last rules.
\end{proof}

\subsection{The Arrow-Product Instance}
\label{secbcdprod}

We consider the following set of constructors:
\begin{itemize}
\item an at most countable set of $0$-ary constructors denoted $X$, $Y$, etc, such that $\width{X}=\width{Y}=\dotsb=1$;
\item a $0$-ary constructor $\omg$ with $\width{\omg}=0$;
\item a constructor $\imp$ with contravariant arity $1$ and covariant arity $1$ such that $\width{\imp}=0$;
\item a constructor $\times$ with contravariant arity $0$ and covariant arity $2$ such that $\width{\times}=1$.
\end{itemize}

By instantiating the $(\constrr)$ rule of Table~\ref{tabiscons} to this set of constructors, and using the (\wk) rule to simplify the $X$ and $\omg$ cases, we obtain the rules of Table~\ref{tabisconsip} where $k\geq 1$.

\begin{table}
\centering
\begin{gather*}
\ISCAXX{X\vdash X}
\DP
\qquad
\ISCOMG{\vdash\omg}
\DP
\qquad
\AIC{A\vdash A_1}
\AIC{\dotsb}
\AIC{A\vdash A_k}
\AIC{B_1,\dotsc,B_k\vdash B}
\ISCIMP{A_1\imp B_1,\dotsc,A_k\imp B_k\vdash A\imp B}
\DP
\qquad
\AIC{\vdash B}
\ISCIMPZ{\vdash A\imp B}
\DP
\\[2ex]
\AIC{A_1,\dotsc,A_k\vdash A}
\AIC{B_1,\dotsc,B_k\vdash B}
\ISCTIMES{A_1\times B_1,\dotsc,A_k\times B_k\vdash A\times B}
\DP
\end{gather*}
\caption{\IScons{} Deduction System with $\imp$ and $\times$.}\label{tabisconsip}
\end{table}

\begin{thm}[Equivalence with \BCD]\label{thmisconsbcd}
  $A\vdash B$ in \IScons{} with the (\constrr) rule instantiated as given in Table~\ref{tabisconsip} if and only if $A\leq B$ using the rules of Table~\ref{tabsubbcd} extended with the rules of Table~\ref{tabsubbcdprod}.
\end{thm}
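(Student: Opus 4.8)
The plan is to prove the two implications separately.

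\emph{From \BCD{} with products to \IScons.} It suffices to check that each rule of Tables~\ref{tabsubbcd} and~\ref{tabsubbcdprod}, read through the translation sending the relation $A\leq B$ to the \IScons{} sequent $A\vdash B$, is derivable or admissible in \IScons. Reflexivity and transitivity are exactly ($\ax$) and ($\cut$) from Proposition~\ref{propisconsadmiss}; the rules $A\inter B\leq A$, $A\inter B\leq B$ and the $\omg$ axiom are Proposition~\ref{propmin}; $A\leq A\inter A$ is the instance of ($\interr$) over two copies of ($\ax$); and the monotonicity of $\inter$ is obtained from ($\wkg$), ($\interr$) and ($\interl$). The monotonicity rules for $\imp$ and $\times$ are the $k=1$ instances of the ($\imp$) and ($\times$) rules of Table~\ref{tabisconsip}. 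The two distributivity axioms are the $k=2$ instances: applying ($\interl$) reduces the goal $(A\imp B)\inter(A\imp C)\vdash A\imp(B\inter C)$ to $A\imp B,A\imp C\vdash A\imp(B\inter C)$, which is the ($\imp$) rule applied to the premises $A\vdash A$ (twice) and $B,C\vdash B\inter C$ (this last one obtained from ($\ax$), ($\wkg$) and ($\interr$)), and symmetrically for products using the ($\times$) rule. Finally $\omg\leq\omg\imp\omg$ comes from the ($\imp_0$) rule applied to the conclusion $\vdash\omg$ of the ($\omg$) rule, followed by ($\wkg$).

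\emph{From \IScons{} to \BCD{} with products.} I would first generalize the statement to lists: by induction on the \IScons{} derivation, $A_1,\dotsc,A_k\vdash B$ entails $\biginter_{i=1}^{k}A_i\leq B$ in the system of Tables~\ref{tabsubbcd} and~\ref{tabsubbcdprod} (the empty intersection being read as $\omg$); the theorem is then the case $k=1$. The structural rules are immediate: ($\wk$) uses $\inter$-projection and monotonicity, ($\interl$) uses associativity and commutativity of $\inter$ up to $\leq$-equivalence, and ($\interr$) is axiom~(\ref{axsubinterr}). The content lies in the instances of ($\constrr$). For the ($\imp$) rule with $k\geq 1$, the induction hypotheses give $A\leq A_i$ for each $i$ and $\biginter_i B_i\leq B$, and the goal $\biginter_i(A_i\imp B_i)\leq A\imp B$ is reached as follows: $\imp$-monotonicity gives $A_i\imp B_i\leq A\imp B_i$, hence $\biginter_i(A_i\imp B_i)\leq\biginter_i(A\imp B_i)$ by $\inter$-monotonicity; then an iterated form of the \BCD{} distributivity axiom gives $\biginter_i(A\imp B_i)\leq A\imp\biginter_i B_i$; then $\imp$-monotonicity again, with $\biginter_i B_i\leq B$, concludes. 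The ($\times$) rule with $k\geq 1$ is handled the same way using the product distributivity axiom, and the ($\imp_0$) rule by the chain $\omg\leq\omg\imp\omg\leq A\imp\omg\leq A\imp B$ (contravariance with $A\leq\omg$, then covariance with $\omg\leq B$).

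The main obstacle I anticipate is in this last direction, namely proving and correctly applying the iterated distributivity lemma $\biginter_i(A\imp B_i)\leq A\imp\biginter_i B_i$ and its product analogue: each is itself a short induction on the number of factors, using the binary distributivity axiom in the inductive step and $\omg\leq\omg\imp\omg$ (already available) in the base case, but it must be combined with careful bookkeeping when translating between \IScons{} lists and \BCD{} intersections, in particular a uniform treatment of empty and singleton index sets. No idea beyond the axioms of Tables~\ref{tabsubbcd} and~\ref{tabsubbcdprod} should be needed.
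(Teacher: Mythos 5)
Your proposal is correct and follows essentially the same route as the paper: for the direction from \BCD{} with products into \IScons{} you verify each rule via the admissible ($\ax$), ($\cut$), ($\wkg$) and the explicit derivations of the two distributivity axioms and $\omg\leq\omg\imp\omg$ (exactly the ``main cases'' the paper displays), and for the converse you prove the same generalized statement $A_1,\dotsc,A_k\vdash B\Rightarrow\biginter_{1\leq i\leq k}A_i\leq B$ by induction on the \IScons{} derivation. The only difference is that you spell out the inductive cases (monotonicity plus iterated distributivity) that the paper leaves implicit, and these details are right.
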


\begin{proof}
  From left to right, we prove a slightly more general statement: $A_1,\dotsc,A_k\vdash B$ implies $\biginter_{1\leq i\leq k}A_i\leq B$.
From right to left, the key results are in Propositions~\ref{propisconsadmiss} and~\ref{propmin}. Main cases are:
\begin{gather*}
  \ISCAX{C\vdash C}
  \ISCAX{C\vdash C}
  \ISCAX{A\vdash A}
  \ISCWKg{A,B\vdash A}
  \ISCAX{B\vdash B}
  \ISCWKg{A,B\vdash B}
  \ISCINTERR{A,B\vdash A\inter B}
  \RL{\imp}
  \TIC{C\imp A,C\imp B\vdash C\imp(A\inter B)}
  \ISCINTERL{(C\imp A)\inter(C\imp B)\vdash C\imp(A\inter B)}
  \DP
\qquad\qquad
  \ISCOMG{\vdash\omg}
  \ISCIMPZ{\vdash\omg\imp\omg}
  \ISCWK{\omg\vdash\omg\imp\omg}
  \DP
\\[3ex]
  \ISCAX{A\vdash A}
  \ISCWKg{A,C\vdash A}
  \ISCAX{C\vdash C}
  \ISCWKg{A,C\vdash C}
  \ISCINTERR{A,C\vdash A\inter C}
  \ISCAX{B\vdash B}
  \ISCWKg{B,D\vdash B}
  \ISCAX{D\vdash D}
  \ISCWKg{B,D\vdash D}
  \ISCINTERR{B,D\vdash B\inter D}
  \ISCTIMES{A\times B,C\times D\vdash (A\inter C)\times(B\inter D)}
  \ISCINTERL{(A\times B)\inter(C\times D)\vdash (A\inter C)\times(B\inter D)}
  \DP
\end{gather*}
\end{proof}

\begin{lem}[Inversion for Arrow]\label{leminvimp}
If $A\leq B$ is obtained from Tables~\ref{tabsubbcd} and~\ref{tabsubbcdprod}, we have:
  \begin{equation*}
  \biginter_{i\in I}(A_i\imp B_i)\leq A\imp B\;\Rightarrow\; \exists J\subseteq I,\;\left(\biginter_{i\in J}B_i\leq B\;\wedge\;\forall i\in J,\;A\leq A_i\right)
  \end{equation*}
\end{lem}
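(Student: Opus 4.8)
The plan is to reduce Lemma~\ref{leminvimp} to the inversion machinery already developed for \IScons{}, via the equivalence established in Theorem~\ref{thmisconsbcd}. So suppose $\biginter_{i\in I}(A_i\imp B_i)\leq A\imp B$ using Tables~\ref{tabsubbcd} and~\ref{tabsubbcdprod}. By Theorem~\ref{thmisconsbcd}, this gives a derivation in \IScons{} of the sequent $\biginter_{i\in I}(A_i\imp B_i)\vdash A\imp B$. First I would apply the admissible rule ($\interle$) from Proposition~\ref{propisconsadmiss} repeatedly (together with ($\exc$)) to turn the single intersection antecedent into the list of its components, obtaining $(A_i\imp B_i)_{i\in I}\vdash A\imp B$. (If $I=\emptyset$ the sequent is $\vdash A\imp B$, which by the subformula property can only come from ($\imp_0$), forcing $\vdash B$, \ie{} $\omg\leq B$; taking $J=\emptyset$ then works, since $\biginter_{i\in J}B_i=\omg\leq B$ and the universally-quantified clause is vacuous.)

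For the main case $I\neq\emptyset$, the key step is to observe that the antecedents $A_i\imp B_i$ and the succedent $A\imp B$ all have $\imp$ as head constructor, so by the subformula property the last rule of the \IScons{} derivation is either ($\wk$) or the instance of ($\constrr$) for $\imp$, namely ($\imp$) of Table~\ref{tabisconsip}. I would make this precise with an auxiliary induction: by induction on a derivation of $\Gamma\vdash A\imp B$ where every type in $\Gamma$ has the form $A_i\imp B_i$, the last rule is ($\wk$)---which deletes one antecedent and leaves a strictly shorter such sequent, so the induction hypothesis applies---or ($\imp$), which directly exhibits the desired data. This is essentially the content of Lemma~\ref{leminv} specialised to $\constr={\imp}$: it yields a subset $J=\{i_1,\dots,i_p\}\subseteq I$ with $A\vdash A_i$ for every $i\in J$ (the contravariant premises, one per retained antecedent) and $(B_i)_{i\in J}\vdash B$ (the single covariant premise). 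Translating back with Theorem~\ref{thmisconsbcd} (left-to-right direction), $A\vdash A_i$ gives $A\leq A_i$ for each $i\in J$, and $(B_i)_{i\in J}\vdash B$ gives $\biginter_{i\in J}B_i\leq B$, which is exactly the conclusion.

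The one point requiring a little care is the degenerate sub-case where ($\imp$) is applied with $k=0$: the rule in Table~\ref{tabisconsip} is stated with $k\geq1$, but the underlying ($\constrr$) rule also covers $k=0$ (since $\width{\imp}=0\leq 0$), giving $\vdash A\imp B$ from $\vdash B$ with no antecedents. In an induction on $\Gamma\vdash A\imp B$ with $\Gamma$ nonempty this case cannot be the last rule (the antecedents would have to be erased by ($\wk$) steps above it), so after peeling off ($\wk$)'s we only ever reach a genuine $k\geq1$ instance when $\Gamma\neq\emptyset$, and the $\Gamma=\emptyset$ situation was already handled separately above. I expect the main obstacle to be purely bookkeeping: phrasing the auxiliary induction so that it cleanly absorbs the ($\wk$) steps and correctly tracks which index each retained antecedent corresponds to, so that the resulting $J$ is genuinely a subset of the original index set $I$ rather than a multiset. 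Since Lemma~\ref{leminv} already packages exactly this, the cleanest write-up simply invokes it rather than redoing the induction.
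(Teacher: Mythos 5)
Your proposal follows essentially the same route as the paper's own proof: translate to \IScons{} via Theorem~\ref{thmisconsbcd}, flatten the intersection using the admissible rules of Proposition~\ref{propisconsadmiss}, invoke Lemma~\ref{leminv} for the $\imp$ instance, and translate back with Theorem~\ref{thmisconsbcd}. Your extra discussion of the $I=\emptyset$ and $k=0$ edge cases is sound elaboration of details the paper leaves implicit, but the argument is the same.
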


This is the key property of subtyping allowing for subject $\beta$-reduction to hold in the \BCD{} typing system. While the traditional proof goes by induction on the derivation which requires a more general statement to deal with the transitivity rule, we rely here on the subformula property. The traditional approach seems more difficult to use in a context where we may have many type constructors.

\begin{proof}
By Theorem~\ref{thmisconsbcd}, we have $\biginter_{i\in I}(A_i\imp B_i)\vdash A\imp B$, thus if $I=\{1,\dotsb,k\}$, we get $A_1\imp B_1,\dotsc,A_k\imp B_k\vdash A\imp B$ by Proposition~\ref{propisconsadmiss}. By applying Lemma~\ref{leminv}, we obtain $A\vdash A_{i_1}$,\dots, $A\vdash A_{i_p}$, $B_{i_1},\dotsc,B_{i_p}\vdash B$ with $J=\{i_1,\dotsc,i_p\}\subseteq I$, so that $\biginter_{i\in J}B_i\vdash B$, and we conclude with Theorem~\ref{thmisconsbcd}.
\end{proof}

\begin{lem}[Inversion for Product]\label{leminvprod}
If $A\leq B$ is obtained from Tables~\ref{tabsubbcd} and~\ref{tabsubbcdprod}, we have:
  \begin{equation*}
  \biginter_{i\in I}(A_i\times B_i)\leq A\times B\;\Rightarrow\;\biginter_{i\in I}A_i\leq A\;\wedge\;\biginter_{i\in I}B_i\leq B
  \end{equation*}
\end{lem}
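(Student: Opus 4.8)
The plan is to mirror the proof of Lemma~\ref{leminvimp}: move the subtyping hypothesis into \IScons{} via Theorem~\ref{thmisconsbcd}, apply the Inversion Lemma~\ref{leminv} to the resulting sequent, and translate back. The only genuinely new point compared with the arrow case is that here the conclusion ranges over the whole index set $I$ rather than over a subset of it, so a monotonicity step for $\biginter$ is needed at the end.

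In detail, write $I=\{1,\dots,k\}$. From $\biginter_{i\in I}(A_i\times B_i)\leq A\times B$ and Theorem~\ref{thmisconsbcd} we obtain $\biginter_{i\in I}(A_i\times B_i)\vdash A\times B$ in \IScons, and then, breaking the intersections on the left with the admissible rules $(\interle)$ and $(\exc)$ of Proposition~\ref{propisconsadmiss}, the sequent $A_1\times B_1,\dots,A_k\times B_k\vdash A\times B$. (If $I=\emptyset$ the hypothesis $\omg\leq A\times B$ is not derivable --- by the subformula property the last rule of a proof of ${}\vdash A\times B$ would have to be $(\constrr)$ with $k=0$, forbidden since $\width{\times}=1$ --- so the implication holds vacuously.) Now apply Lemma~\ref{leminv} with $\constr=\times$, whose contravariant arity is $0$ and covariant arity is $2$: there is $J=\{i_1,\dots,i_p\}\subseteq\{1,\dots,k\}$ such that $A_{i_1},\dots,A_{i_p}\vdash A$ and $B_{i_1},\dots,B_{i_p}\vdash B$ (there are no contravariant premises). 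By Theorem~\ref{thmisconsbcd} again this gives $\biginter_{i\in J}A_i\leq A$ and $\biginter_{i\in J}B_i\leq B$.

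It remains to go from $J$ up to all of $I$. Since $J\subseteq I$, the kernel properties of Table~\ref{tabsub} (valid by Proposition~\ref{propmin}, and derivable from Table~\ref{tabsubbcd}) give $\biginter_{i\in I}A_i\leq\biginter_{i\in J}A_i$: indeed $\biginter_{i\in I}A_i$ is a lower bound of $\{A_j\mid j\in J\}$ and hence is below their greatest lower bound; similarly $\biginter_{i\in I}B_i\leq\biginter_{i\in J}B_i$. Transitivity then yields $\biginter_{i\in I}A_i\leq A$ and $\biginter_{i\in I}B_i\leq B$, as required. I do not anticipate any real obstacle here --- every step reuses an already-established result --- the only points needing a moment's care being the direction of the monotonicity inequality (the $\biginter$ over the larger set is the smaller type) and the degenerate case $I=\emptyset$.
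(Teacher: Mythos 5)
Your proof is correct and follows essentially the same route as the paper, which proves this lemma exactly as it proves Lemma~\ref{leminvimp}: pass to \IScons{} via Theorem~\ref{thmisconsbcd}, flatten the left-hand intersection with Proposition~\ref{propisconsadmiss}, apply Lemma~\ref{leminv} for $\constr=\times$, and translate back. Your two extra observations --- the monotonicity step $\biginter_{i\in I}A_i\leq\biginter_{i\in J}A_i$ needed because the conclusion ranges over all of $I$ rather than a subset, and the vacuity of the case $I=\emptyset$ since $\width{\times}=1$ forbids deriving ${}\vdash A\times B$ --- are exactly the details the paper leaves implicit in its ``Similarly''.
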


\begin{proof}
Similarly by Theorem~\ref{thmisconsbcd}, Proposition~\ref{propisconsadmiss} and Lemma~\ref{leminv}.
\end{proof}

\subsection{\BCD{} Subtyping with Unary Constructors}

Our system \IScons{} also generalises \BCD{} subtyping with unary covariant constructors~\cite{intermixin}. In their setting constructors come as a set of unary covariant operations $\constr$ on types added to the usual $\imp$ and $\omg$ constructors:
  \begin{equation*}
        A,B ::= X \mid A\imp B \mid A\inter B \mid \omg \mid \constr(A)
  \end{equation*}
where each constructor $\constr$ satisfies the following subtyping properties:
\begin{equation*}
    \AIC{A\leq B}
    \UIC{\constr(A)\leq\constr(B)}
    \DP
\hspace{2cm}
    \ZIC{\constr(A)\inter\constr(B)\leq\constr(A\inter B)}
    \DP
\end{equation*}

This exactly corresponds, in the \IScons{} setting, to a set of constructors $\constr$ all satisfying $\convar{\constr}=0$, $\covar{\constr}=1$ and $\width{\constr}=1$ (the constructors $\imp$ and $\omg$ are obtained as before).
For example the associated ($\constrr$) rule can be derived in the \cite{intermixin} setting:
\begin{prooftree}
    \AIC{}
    \doubleLine
    \UIC{\biginter_{1\leq i\leq k}\constr(A_i)\leq \constr(\biginter_{1\leq i\leq k}A_i)}
    \AIC{\biginter_{1\leq i\leq k}A_i\leq A}
    \UIC{\constr(\biginter_{1\leq i\leq k}A_i)\leq\constr(A)}
    \BIC{\biginter_{1\leq i\leq k}\constr(A_i)\leq \constr(A)}
\end{prooftree}

\section{Conclusion}

We have presented a general way of defining a subtyping relation on intersection types which allows us to extend the \BCD{} subtyping to generic contravariant/covariant type constructors. It makes easy to derive key properties used to get subject reduction and subject expansion of the induced type systems. As a concrete example we have fully developed the extension of \BCD{} with product types.

Our approach can be extended to the case where a preorder relation $\cprec$ between constructors (with the same arities) leads to $\constr_1\cprec\constr_2\;\Rightarrow\;\constr_1(\vec A;\vec B)\leq\constr_2(\vec A;\vec B)$, by a natural generalisation of the ($\constrr$) rule:
\begin{prooftree}
\AIC{\begin{array}[b]{c}\constr_1\cprec \constr\\ \vdots\\ \constr_k\cprec \constr\end{array}}
\AIC{\begin{array}[b]{c}A_1\vdash A_1^1\quad\dotsb\quad A_1\vdash A_1^k \\ \vdots\\ A_{\convar{\constr}}\vdash A_{\convar{\constr}}^1\quad\dotsb\quad A_{\convar{\constr}}\vdash A_{\convar{\constr}}^k\end{array}}
\AIC{\begin{array}[b]{c}B_1^1,\dotsc,B_1^k\vdash B_1\\ \vdots \\ B_{\covar{\constr}}^1,\dotsc,B_{\covar{\constr}}^k\vdash B_{\covar{\constr}}\end{array}}
\AIC{\width{\constr}\leq k}
\QIC{\constr_1(A_1^1,\dotsc,A_{\convar{\constr}}^1;B_1^1,\dotsc,B_{\covar{\constr}}^1),\dotsc,\constr_k(A_1^k,\dotsc,A_{\convar{\constr}}^k;B_1^k,\dotsc,B_{\covar{\constr}}^k)\vdash \constr(A_1,\dotsc,A_{\convar{\constr}};B_1,\dotsc,B_{\covar{\constr}})}
\end{prooftree}
Ordering constructors is natural in the context of object-oriented languages~\cite{nominalsubtyping,intermixin}.

Another interesting instance would be the study of sum types, but subject expansion looks more complicated. From $\A\vdash\subst{u}{t}{x}:C$, we can find some type $A$ such that $\A\vdash t:A$ and $\A,x:A\vdash u:C$, but we do not find a way to complete the following derivation:
    \begin{prooftree}
    \AIC{\A\vdash t:A}
    \UIC{\A\vdash \texttt{inl }t:A+\_}
    \AIC{\A,x:A\vdash u:C}
    \AIC{\text{\Large ?}}
    \noLine
    \UIC{\A,y:\_\vdash v:C}
    \TIC{\A\vdash\begin{array}[c]{l}\texttt{match inl }t\texttt{ with}\\
                    \mid\; \texttt{inl } x \;\mapsto\; u \\
                    \mid\; \texttt{inr } y \;\mapsto\; v
                 \end{array} : C}
    \end{prooftree}
while \quad$\texttt{match inl }t\texttt{ with}\mid\texttt{inl } x \;\mapsto\; u \mid \texttt{inr } y \;\mapsto\; v$\quad reduces to \quad$\subst{u}{t}{x}$.
The idea of introducing some bottom type $\alpha$ with a rule \ZIC{\A,x:\alpha\vdash t:C}\DP seems too naive and breaks the system.

We also plan to work on the characterisation of normalizability properties of terms through typing properties in intersection type systems: solvability, normalization, strong normalization, etc. We would like to extend the known results~\cite{interbcd} to the case with more type constructors.

\paragraph{Acknowledgements.}
We would like to thank to Jan Bessai and Andrej Dudenhefner who suggested investigating \BCD{} with constructors.
Thanks also to the anonymous referees for their comments.

\bibliographystyle{eptcsalpha}
\bibliography{ll}
\end{document}